\def\BState{\State\hskip-\ALG@thistlm}
\newtheorem{lemma}{\textbf{Lemma}}
\theoremstyle{definition}
\newtheorem{property}{\textbf{Property}}
\newtheorem{problem}{Problem}
\begin{document}
%
\title{Terminal-Set-Enhanced Community Detection in Social Networks }



%
\author{\IEEEauthorblockN{Guangmo Tong\IEEEauthorrefmark{1},
Lei Cui\IEEEauthorrefmark{1},
Weili Wu\IEEEauthorrefmark{2}\IEEEauthorrefmark{1}, 
Cong Liu\IEEEauthorrefmark{1} and
Ding-Zhu Du\IEEEauthorrefmark{1}}
\IEEEauthorblockA{\IEEEauthorrefmark{1}Dept. of Computer Science, University of Texas at Dallas, USA\\}
\IEEEauthorblockA{\IEEEauthorrefmark{2}Taiyuan University of Technology, China}
\IEEEauthorblockA{\{guangmo.tong, cxl131130, weiliwu, cxl137330, dzdu\}@utdallas.edu}}


\maketitle

\begin{abstract}
Community detection aims to reveal the community structure in a social network, which is one of the fundamental problems. In this paper we investigate the community detection problem based on the concept of terminal set. A terminal set is a group of users within which any two users belong to different communities. Although the community detection is hard in general, the terminal set can be very helpful in designing effective community detection algorithms. We first present a 2-approximation algorithm running in polynomial time for the original community detection problem. In the other issue, in order to better support real applications we further consider the case when extra restrictions are imposed on feasible partitions. For such customized community detection problems, we provide two randomized algorithms which are able to find the optimal partition with a high probability. Demonstrated by the experiments performed on benchmark networks the proposed algorithms are able to produce high-quality communities.
\end{abstract}


%
\IEEEpeerreviewmaketitle

\section{Introduction}
Community structure is one of the essential properties of social networks. That is, the users can be divided into groups within which the communications are dense while between which the communications are sparse. Such a modular structure discloses the internal organizations of users and helps in understanding the functional modules over the whole network. Thus, community detection has become a crucial topic in many application domains \cite{fortunato2010community, malliaros2013clustering}. The early research on community detection can be traced back to Weiss and Jacobson \cite{weiss1955method}. Nowadays, there has been a huge body of research works performed regarding this problem. 

A classic way to model a social network is to present it as a graph where the vertices and edges represent the users and the social ties, respectively. Thus, detecting the community structure is to partition the users into community-like subsets. In order to design community detection algorithms, one should consider the problem of defining a good partition. The answer to this question yields an objective function that measures the quality of a partition. In fact, there is an important branch of the community detection research which seeks appropriate objective functions and solves the community detection problem from the perspective of optimization. Although there is no agreed definition for community, a fundamental intuition is that the edges within a community are more than those between different communities, which is called the min-cut intuition. Following this intuition various objective functions have been proposed in the last two decades \cite{almeida2011there}. Modularity-based objective functions \cite{newman2004finding} state that an actual cluster has more internal edges compared to a random partition. Density-based objective functions \cite{mancoridis1998using} search the partitions where each community forms a dense subgraph. For example, the \textit{internal-density} and the \textit{external-density} of a community $S$ are defined as $\frac{2 \cdot k_{in}(S)}{n_s\cdot (n_s-1)}$ and $\frac{2 \cdot k_{out}(S)}{n_s\cdot (n-n_s)}$, respectively, where $k_{in}(S)$ is the number of internal edges of $S$; $k_{out}(S)$, the number of cut-edges (i.e., the edges between different communities) of $S$; $n_s$, the number of nodes in $S$; and $n$ is number of nodes in the network. For a good community, we expect large internal-density and small external-density. Kannan \textit{et al. }\cite{kannan2004clusterings} propose the \textit{conductance} metic which has been widely adopted to evaluate a partition for community detection. The conductance $\phi(S)$ of a community $S$ is defined as
\begin{equation}
 \phi(S)=S_s/\text{min}(\text{Vol}(S), \text{Vol}(V \setminus S)),
\end{equation}
where $S_s=|\{(u,v)|u\in S,v \notin S\}|$ and $\text{Vol}(S)=\sum_{u \in S}d(u)$, where $d(u)$ is the degree of node $u$. In this paper, we follow the above framework, searching good partitions according to an appropriate objective function. 

\begin{figure}[t]
\begin{center}
\includegraphics[width=3.5in]{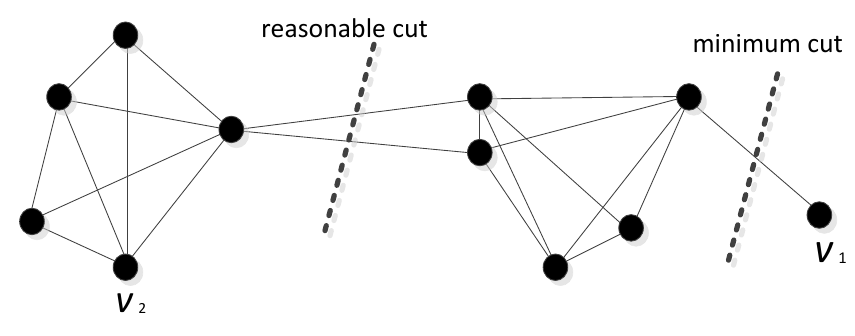} 
\end{center} 
\caption{An example of poor minimum cut.}
\label{fig:example}
\vspace{-6mm}
\end{figure}

According to the survey studies \cite{fortunato2010community, malliaros2013clustering}, most of the proposed objective functions are too complicated to obtain an approximation algorithm, and the state-of-art approaches cannot provide provable performance guarantees when partition the network into three or more communities in general settings. However, if we devote our attention to the initial min-cut intuition there is a very simple objective function, the number of the cut-edges. With this objective function, our problem becomes the classic minimum cut problem which has been well studied by the computation theory community. In fact, the reason why we seek for other objective functions is that the approach solely based on the minimum cut problem is problematic. An illustration example is shown in Fig. \ref{fig:example}. As shown in the figure, the community structure is clear as specified by the reasonable cut, while the minimum cut separates $v_1$ from other nodes. Therefore, the critical problem is that how to design an approximation algorithm while ruling out the scenario in Fig. \ref{fig:example}, which is the main problem considered in this paper.  Instead of employing sophisticated objective functions we focus on our initial objective, i.e., minimizing the number of cut-edges. Given a partition $C=\{S_1,...,S_k\}$ of users, a user set $\{v_1,...,v_k\}$ is a \textit{terminal set} of $C$ if $v_i \in S_i$ for each $v_i$.  Our approach is motivated by the following observation. \textit{Although most of the graph partitioning problems are NP-hard in the general case, we can find effective approximations if one of the terminal sets of the optimal partition is known to us.} We consider two problems in this paper. One is the \textit{original community detection problem} and the other is called the \textit{customized community detection problem.}

\textbf{Original community detection.} We first consider the original community detection problem which has been widely studied by researchers. We design the k-terminal community detection algorithm which is able to effectively identify the community structure in a social network. In this algorithm, we first search the terminal set of the optimal partition and then use the classic approximation algorithm of the cut-related problems to obtain a partition with a small number of cut-edges. Meanwhile, by carefully selecting terminal sets, we can effectively avoid poor cuts and reduce the running time. As later shown in Sec. \ref{sec:experiments}, this algorithm has excellent performance on the benchmark networks. To the best of our knowledge, this is the first work that aims to design community detection algorithms with provable performance guarantees for general graphs.

\textbf{Customized community detection.} In order to better support social network applications, we further consider the community detection problem which is customized for different conditions. For instance, for the analysis of influence diffusion, one may require a partition where the top-k influential users are assigned to different communities. In the sense of functional module studying, in order to obtain the general picture we prefer the partitions where each community is sufficiently large. For these customized community detection problems, we provide two randomized algorithms based on the terminal set. These algorithms are able to find the optimal solution with a high probability and they can be efficiently implemented in parallel computing platforms.


The rest of the paper is organized as follows. The problems considered in this paper are stated in Sec. \ref{sec:problem}. The algorithms designed for the original community detection problem and the customized community detection problem are shown in Secs. \ref{sec:original} and \ref{sec:customized}, respectively. We evaluate the proposed algorithms in Sec. \ref{sec:experiments}. Sec. \ref{sec:related} is devoted to the related work and Sec. \ref{sec:conclusion} concludes.

\section{Problem Statement}
\label{sec:problem}
A social network is modeled by an undirected weighted graph $G=(V,E)$ where the vertices represent the users and the edges denote the social relationship between users. We will use terms \textit{node}, \textit{vertex} and \textit{user} interchangeably. The weight between two nodes $u$ and $v$ is denoted by $w(u,v)$. In general, one can image $w(u,v)$ as the similarity between $u$ and $v$. The degree $d(u)$ of a node $u$ is defined as $\sum_{(u,v) \in E} w(u,v)$. Let $n$ and $m$ denote the number of nodes and edges, respectively. Let $V_u^l=\{u\} \cup N_u^l$ be the \textit{l-local area} of node $u$, where $N_u^l$ is the set of $l$ nearest nodes of $u$ and the distance between two nodes is measured by the length of the shortest path. 

The original community detection problem simply seeks for community-like subsets while the customized community detection problem places extra restrictions on the feasible partitions. Let $k$ be the number of communities and we assume $k$ is explicitly given as an input. For a partition $C=\{S_1,...,S_k\}$ of $V$, a node set $V_k=\{v_1,...v_k\} \subseteq V$ is called a \textit{terminal set} of $C$ if $v_i \in S_i$ for $1 \leq i \leq k$. Let  $cut(S_i,\overline{S_i})=\sum_{u \in S, v \notin S} w(u,v)$, and, for an edge set $E^{'} \subseteq E$, $w(E^{'})=\sum_{(u,v) \in E} w(u,v)$.

\section{Original Community Detection}
\label{sec:original}
In this section, we focus on the original community detection problem. We first discuss how the classic cut-related algorithms partially solve the community detection problem and then show how to improve the community quality via terminal sets.

If we only consider the min-cut intuition, then our problem is identical to the minimum k-way cut (MKWC) problem shown as follows. 
\begin{problem}{MKWC Problem.}
\label{problen:min-k-cut}
\begin{equation*}
\begin{aligned}
& {\text{minimize}}
& & \sum_{i=1}^{k} cut(S_i,\overline{S_i}) \\
& {\text{subject to}}& & S_i \cap S_j = \emptyset, \; i \neq j ,\\
& & & \bigcup_{i=1}^{k} S_i  = V, \\
& & & S_i \subseteq V, \; i = 1, \ldots, k.
\end{aligned}
\end{equation*}
\end{problem}

Note that each partition of $V$ corresponds to a set of cut-edges. The MKWC problem is closely related to the maximum flow problem and it is polynomial solvable \cite{goldschmidt1994polynomial} for fixed $k$.  Unfortunately, as aforementioned, this approach may produce poor partitions as illustrated in Fig. \ref{fig:example}. To address this problem, instead of directly solving the MKWC problem, we seek help from other cut-related optimization problems to obtain a good approximation. In particular, we consider the minimum k-terminal cut (MKTC) problem. 

\begin{problem}{\textbf{MKTC Problem.}}
\label{problem:k-terminal-cut}
Given a graph $G=(V,E)$ and k vertex sets $C^{'}=\{V_1,...,V_k\}$ where $V_i \subseteq V$ and $V_i \cap V_j=\emptyset$ for $i \neq j$, our goal is to remove the minimum total-weight subset of edges to make $V_i$ separated from each other. With an input $C^{'}$  we denote this problem by MKTC($C^{'}$). When $k=2$, we denote this problem by M2TC($S$,$T$) with two input vertex sets $S$ and $T$.
\end{problem}


The M2TC($S$,$T$) problem is also called the s-t maximum flow problem and it can be easily solved according to the min-max flow theory \cite{ford1962flows}. Besides the famous Ford-Fulkerson method \cite{ford1962flows} other efficient approaches can be found in \cite{phillips1993online} and \cite{karger1996new}. In this paper, we assume that the Ford-Fulkerson method is used whenever we solve the M2TC$(S,T)$ problem.

The MKTC$(C^{'})$ problem can be efficiently approximated by solving the M2TC$(S,T)$ problem. That is, for each vertex set $V_i \in C^{'}$, we solve the M2TC$(V_i,V \setminus V_i)$ problem. Let $E_{i}$ be the output of the M2TC$(V_i,V \setminus V_i)$ problem and $E_{C^{'}}=\bigcup_{i=1}^{k} E_{i}$. According to \cite{du2011design} we have the following result.

\begin{lemma}
\label{lemma:2-app-mkct}
$E_{C^{'}}$ is a $2$-approximation of the MKTC$(C^{'})$ problem.
\end{lemma}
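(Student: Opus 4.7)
The plan is to adapt the classical isolating-cut argument to the set-valued terminal setting of MKTC. I would fix an optimal MKTC$(C')$ solution $E^*$ of weight $\mathrm{OPT}$, and observe that after deleting $E^*$ from $G$ the sets $V_1,\dots,V_k$ are pairwise disconnected, so no connected component of $G \setminus E^*$ can meet more than one $V_j$.

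First I would define, for each $i$, the set $A_i$ as the union of all connected components of $G \setminus E^*$ that intersect $V_i$. The previous observation yields pairwise disjointness of $A_1,\dots,A_k$, together with $V_i \subseteq A_i$ and $A_i \cap V_j = \emptyset$ for $j \neq i$. Next I would introduce the edge boundary $B_i = \{(u,v) \in E : u \in A_i,\ v \notin A_i\}$. A short argument gives $B_i \subseteq E^*$: any edge in $B_i$ that were missing from $E^*$ would survive in $G \setminus E^*$ and merge $A_i$ with a vertex outside it, contradicting how $A_i$ was built.

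The heart of the proof then consists of three observations. First, $B_i$ is a feasible cut for M2TC$(V_i, V \setminus V_i)$, because removing $B_i$ disconnects $V_i \subseteq A_i$ from every other terminal set $V_j \subseteq V \setminus A_i$; by minimality of the Ford--Fulkerson output $E_i$ this yields $w(E_i) \leq w(B_i)$. Second, a double-counting step: each edge of $E^*$ has two endpoints, and each endpoint lies in at most one $A_j$ by disjointness, so the edge appears in at most two of the boundaries $B_1,\dots,B_k$, giving $\sum_{i} w(B_i) \leq 2\, w(E^*)$. Third, the weight of a union is at most the sum of weights, so $w(E_{C'}) = w(\bigcup_i E_i) \leq \sum_i w(E_i)$.

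Chaining the three bounds gives
\[
w(E_{C'}) \;\leq\; \sum_{i=1}^{k} w(E_i) \;\leq\; \sum_{i=1}^{k} w(B_i) \;\leq\; 2\, w(E^*) \;=\; 2\cdot\mathrm{OPT},
\]
which is exactly the claimed $2$-approximation. The main obstacle I anticipate is the first observation above: for singleton terminals the feasibility of $B_i$ is immediate, but for set-valued $V_i$ one must carefully verify that $A_i$ contains all of $V_i$ and excludes every other $V_j$, so that $B_i$ genuinely qualifies as a separator in the M2TC sense. Once that bookkeeping is spelled out, the double-counting step and the min-cut substitution mirror the classical singleton-terminal argument and require no new ideas.
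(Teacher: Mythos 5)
The paper does not prove this lemma itself --- it simply cites \cite{du2011design} --- and your argument is precisely the standard isolating-cut proof given there (define the components $A_i$ of $G\setminus E^*$ meeting $V_i$, show each boundary $B_i$ is a feasible isolating cut contained in $E^*$, and double-count each edge of $E^*$ in at most two boundaries). The argument is correct. The only point worth flagging is one you half-anticipate: read literally, M2TC$(V_i, V\setminus V_i)$ demands separating $V_i$ from \emph{every} other vertex, in which case $E_i$ is forced to be the entire edge boundary of $V_i$ and the inequality $w(E_i)\leq w(B_i)$ (and indeed the lemma itself) fails; your feasibility step actually establishes that $B_i$ separates $V_i$ from $\bigcup_{j\neq i}V_j$, which is the intended isolating-cut reading and the one under which the lemma holds.
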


We denote this 2-approximation algorithm by $2$-MKTC. Now let us see how to approximate the MKWC problem via $2$-MKTC. Suppose the optimal solution to the MKWC problem is $E^{*}$ corresponding to a partition $C^{*}$ and $V_k=\{v_1,...,v_k\}$ is a terminal set of $C^{*}$. Let $C^{'}=\{\{v_1\},...,\{v_k\}\}$. We have the following result.

\begin{lemma}
\label{lemma:2-app-mkwc}
Let $E_{C^{'}}$ be the output of $2$-MKTC$(C^{'})$. $E_{C^{'}}$ is a 2-approximation of the MKWC problem.
\end{lemma}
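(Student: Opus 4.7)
The plan is to relate the optimal MKWC cut $E^*$ to the optimum of MKTC$(C')$ via the fixed terminal set, then invoke Lemma 1 to pass to the approximate solution $E_{C'}$.

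First, I would observe that $E^*$ is itself a feasible solution to MKTC$(C')$. Indeed, removing $E^*$ breaks $G$ into components $S_1,\dots,S_k$, and since $V_k=\{v_1,\dots,v_k\}$ is a terminal set of $C^{*}$, each singleton $\{v_i\}\in C'$ sits in a distinct component $S_i$. Hence the $\{v_i\}$'s are pairwise separated by $E^*$, so $E^*$ is a feasible k-terminal cut for $C'$. Letting $E^{\mathrm{OPT}}_{C'}$ denote an optimal MKTC$(C')$ solution, this gives $w(E^{\mathrm{OPT}}_{C'})\le w(E^*)$.

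Next I would apply Lemma \ref{lemma:2-app-mkct} to the $2$-MKTC output: $w(E_{C'})\le 2\, w(E^{\mathrm{OPT}}_{C'})$. Chaining the two inequalities yields
\begin{equation*}
w(E_{C'})\;\le\;2\, w(E^{\mathrm{OPT}}_{C'})\;\le\;2\, w(E^*),
\end{equation*}
which is exactly the 2-approximation bound relative to the MKWC optimum.

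The last subtlety, and the only place where care is needed, is to argue that $E_{C'}$ actually corresponds to a partition into \emph{exactly} $k$ sets as required by the MKWC formulation. Removing $E_{C'}$ separates $v_1,\dots,v_k$ from each other, so $G\setminus E_{C'}$ has at least $k$ connected components. If it has more than $k$, I would merge the extra components into existing ones (for instance by attaching each extra component to any neighbouring one that already contains some $v_i$); merging only removes edges from the cut, so $w(E_{C'})$ can only decrease, and the resulting $k$-partition still satisfies the bound above. I expect this bookkeeping step — phrasing the merge cleanly and noting monotonicity of the cut weight — to be the main (and essentially only) obstacle, since the inequality chain itself is immediate once feasibility of $E^*$ for MKTC$(C')$ is observed.
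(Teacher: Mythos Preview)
Your proposal is correct and follows essentially the same route as the paper: show $E^*$ is feasible for MKTC$(C')$, invoke Lemma~\ref{lemma:2-app-mkct}, and handle the ``more than $k$ parts'' issue by merging extra components (the paper relegates this last step to a footnote). The only cosmetic difference is that the paper actually proves the equality $w(E^*)=w(\text{opt}(\text{MKTC}(C')))$ by also noting the reverse inequality, whereas you use just the one direction $w(E^{\mathrm{OPT}}_{C'})\le w(E^*)$, which is all that is needed for the bound.
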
 
\begin{proof}
Let opt(MKTC$(C^{'})$) be the optimal solution to MKTC$(C^{'})$. Since $C^{*}$ is a feasible solution to MKTC$(C^{'})$, $w(E^{*}) \geq w(\text{opt}(\text{MKTC}(C^{'})))$. Similarly, since opt(MKTC($C^{'}))$ separates the graph into $k$ parts\footnote{One may note that a feasible solution to the MKTC problem may separate the whole network into more than $k$ parts. However, we can always get rid of the extra edges until exactly $k$ parts left. With this process we will have a better solution and thus Lemma \ref{lemma:2-app-mkwc} still holds. Without loss generality, we assume all the feasible solutions to the MKTC problem separate the graph into exactly $k$ parts. Similarly, we assume that the outputs of 2-MKTC also separates the network into $k$ parts.}, $w(E^{*}) \leq w(\text{opt}(\text{MKTC}(C^{'})))$. Thus, 
\begin{equation*}
w(E^{*}) = w(\text{opt}(\text{MKTC}(C^{'}))).
\end{equation*} Because $E_{C^{'}}$ also separates the graph into $k$ parts and, combing Lemma \ref{lemma:2-app-mkct}, 
\begin{equation*}
w(E_{C^{'}}) \leq 2 \cdot w(\text{opt}(\text{MKTC}(C^{'})))= 2\cdot w(E^{*}),
\end{equation*} 
$E_{C^{'}}$ is 2-approximation of the MKWC problem.
\end{proof}

According to Lemma \ref{lemma:2-app-mkwc}, by enumerating the subsets $V_k$ with $k$ vertices in $V$, we have already had a 2-approximation algorithm of the community detection problem shown in Problem \ref{problen:min-k-cut}. Since the Ford-Fulkerson method runs in $O(m^2 \cdot n)$, the above approach has a running time of $O(k^2 \cdot m^2 \cdot n^{k+1})$. Thus, this is a polynomial approximation algorithm for fixed $k$. However, $O(k^2 \cdot m^2 \cdot n^{k+1})$ is still excessive for large networks and more importantly this approach is still unable to avoid poor cuts. Essentially, here enumerating the subsets with $k$ vertices is to search the terminal set of the optimal partition (i.e., true partition). In the next, we will see that by carefully guessing the terminal set we can reduce the running time and simultaneously rule out poor cuts.

\begin{algorithm}[t]
\caption{ \textbf{TSECD-D $(G,k,p,l)$}}\label{alg:TSECD-d}
\begin{algorithmic}[1]
\State \textbf{Input}: \small{Network $G=(V,E)$, the number of clusters $k$ and two parameters $p$ and $l$.}
\State \textbf{Output}: \small{A partition $C=\{S_1,...,S_k\}$ of $V$.}
\State Set $V_p$ be the set of nodes with $p$ highest degrees; 
\For {each $V_k=\{v_1,...,v_k\} \subseteq V_p$ }  
\State Set $E_{V_k}= \emptyset$;
\For {each pair $(V_{v_i}^l,V_{v_j}^l)$, $i \neq j$}  
\If {$V_{v_i}^l \cap V_{v_j}^l==\emptyset$}
\State $E_{V_k}=E_{V_k} \cup \text{M2CT}(V_{v_i}^l,V_{v_j}^l)$;
\EndIf
\EndFor 
\State Let $C(E_{V_k})$ be the partition specified by $E_{V_k}$;
\EndFor
\State Return the $C(E_{V_k})$ with the smallest conductance.
\end{algorithmic}
\end{algorithm}

As shown in Fig. \ref{fig:example}, if we use the above approach, the poor cut occurs when $v_1$ is included in $V_k$, e.g., $V_k=\{v_1,v_2\}$. In general, if $V_k$ contains a node $u$ which is in the community $S$ but $u$ has a degree less than $cut(S,\overline{S})$, the min-cut intuition will assign $u$ as a singleton instead of producing $S$ as a community. Thus, the key point is that do not choose a $V_k$ which contains low degree nodes. However, one can see that in a large network even the maximum node degree can be less than the smallest $cut(S,\overline{S})$ among all the communities. To address this problem, instead of guessing a terminal set $V_k$ with $k$ nodes, we guess a collection of $k$ sets. With this modification, we have the following approach with two integer parameters $p$ and $l$.
\begin{enumerate}
\item Choose a vertex subset $V_p$ with $p$ vertices. We will later discuss how to make the selection.
\item For each $V_k=\{v_1,...,v_k\} \subseteq V_p$, run $2$-MKTC on $C^{'}=\{V_{v_1}^l,...,V_{v_k}^l\}$ to obtain a set of edges $E_{C^{'}}$ making the sets in $C^{'}$ pairwise separated. Recall that $V_{u}^l$ is the $l$-local area of node $u$. Thus, for each $C^{'}$ we have a partition corresponding to $E_{C^{'}}$.
\item Among all the partitions obtained in step 2, we choose the one with the smallest conductance.
\end{enumerate}

We call the above approach as the terminal-set-enhanced community detection  algorithm denoted by TSECD. By Lemma \ref{lemma:2-app-mkwc}, if a certain $V_k$ selected in step 2 is a terminal set of the true partition, the corresponding partition has the minimum cut. Note that $V_k$ in step 2 depends on the $V_p$ selected in step 1. Therefore, we have to choose a $V_p$ such that each true community has at least one node in $V_p$. Based on the intuition that a real community usually has a central node with high \textit{local density}, we employ the following rules for choosing $V_p$.
\begin{itemize}
\item \textbf{Degree.} $V_p$ is the set of $p$ nodes with the highest degrees. 
\item \textbf{Centrality.} $V_p$ is the set of $p$ nodes with the highest centralities. We have the following measure for the centrality $\text{Cen}(u)$ of a node $u$. Given an integer $h$,
\begin{equation*}
\label{eq:cen_1}
\text{Cen}(u)=[\sum_{v \in V_u^h}d(u,v)]^{-1},
\end{equation*}
where $d(u,v)$ is the length of the shortest path between $u$ and $v$. This measure is preferred if we have a good estimation of the average size of the communities.
\end{itemize}

\begin{figure}[t]
\begin{center}
\includegraphics[width=3.5in]{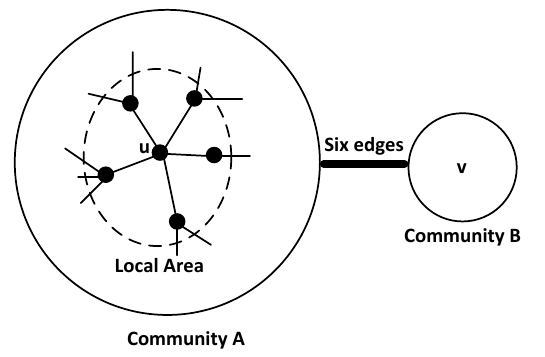} 
\end{center} 
\caption{Local area v.s. single node.}
\label{fig:nodevsset}
\vspace{-6mm}
\end{figure}

It is worthy to note that the approximation ratio still holds as long as each true community contains at least one $V_{v_i}^l$ among $v_i \in V_p$. We will particularly focus on the degree based selection in our experiments. We denote the TSECD algorithm with the degree based selection by TSECD-D. The formal description of the TSECD-D algorithm is shown in Algorithm \ref{alg:TSECD-d}. Because a local area has more out-edges than a single node does, the TSECD-D algorithm can effectively avoid the small subsets which are not community-like. Fig. \ref{fig:nodevsset} shows a simple example for illustration. As shown in the figure, suppose $A$ and $B$ are true communities linked by six cut-edges, where $u$ has the maximum degree in $A$ and $v$ has a sufficient large degree. If we choose $u$ and $v$ then M2TC$(\{u\},\{v\})$ will return the edges adjacent to $u$ as the minimum cut. While using $V_u^l$ instead of $u$ we have a higher chance to obtain the correct cut-edges as the local area of $u$ has eleven out-edges which is larger than the number of cut-edges. Finally, since we have limited our attention to the selected $p$ nodes, the running time of the TSECD-D algorithm is $O(m^2 \cdot n \cdot p^{k} \cdot k^2)$, which is fully polynomial. We will discuss the selection of the parameters $p$ and $l$ in experiments.

\section{Customized Community Detection}

\label{sec:customized}
In this section we study the customized community detection problem. The original community detection problem can be over-simplified for real cases as it cannot take account of the customized properties required by real-world services. In general, each customized community detection problem specifies a set of feasible partitions and it has the following formulation.

\begin{problem}
\label{problen:customized-k-cut}
\begin{equation*}
\begin{aligned}
& {\text{minimize}}
& & \sum_{i=1}^{k} cut(S_i,\overline{S_i}) \\
& \text{subject to}  & & S_i \cap S_j = \emptyset, \; i \neq j,\\
& & & \bigcup_{i=1}^{k} S_i  = V, \\
& & & S_i \subseteq V, \; i = 1, \ldots, k,\\
& & & \{S_1,...S_k\} \in C^{*},
\end{aligned}
\end{equation*}
where $C^{*}$ is the set of feasible partitions.
\end{problem}

From another perspective, a customized community detection problem is to find a set of edges $E^{'} \subseteq E$ with minimum $|E^{'}|$ such that by removing $E^{'}$ we have a partition satisfying the required conditions. Without loss of generality we assume there is a unique optimal solution $E^{*} \subseteq E$. One can see that such problems can be very difficult due to the extra conditions, and most of them are NP-hard problems. Thus, instead of directly searching for $E^{*}$, we consider how to randomly generate edge sets such that $E^{*}$ can be generated with a certain probability. Suppose there is an algorithm which is able to generate $E^{*}$ with a probability $\overline{p}$. By running this algorithm $\frac{- c \cdot \ln \overline{p}}{2 \cdot \overline{p}}$ times, we can find $E^{*}$ with a probability of
\begin{equation}
\label{eq:prob}
1-(1-\overline{p})^{\frac{- c \cdot \ln \overline{p}}{2 \cdot \overline{p}}} \geq 1-\overline{p}^{c/2}
\end{equation}
Note that if we directly sampling edge sets from $2^{E}$ then $\overline{p}=1/2^{m}$ and the running time is $O(c \cdot m \cdot 2^{m})$, which is extremely time consuming. Inspired by \cite{karger1996new}, we have a very simple random algorithm that is efficient and effective, as shown in Algorithm \ref{alg:NMBCD}. We denote this algorithm by the node-merging-based community detection (NMBCD) algorithm. In this algorithm, we merge two nodes at one time until $k$ nodes left. These $k$ nodes represent a k-partition of $V$. An instance is shown in Fig. \ref{fig:mergeinstance}. In this example, the network has two evident communities and in this concrete process we are fortunate that we finally obtain the correct community detection. Although we cannot always get the optimal solution with the NMBCD algorithm, it can produce the true community detection with a relatively high probability. This is because the edges between different communities are much less than those within the same community. By the analysis similar to that in \cite{karger1996new}, we have the following result.

\begin{algorithm}[t]
\caption{ \textbf{$NMBCD(G,k)$}}\label{alg:NMBCD}
\begin{algorithmic}[1]
\State \textbf{Input}: \small{Network $G=(V,E)$ and the number of clusters $k$.}
\State \textbf{Output}: \small{A partition $C=\{S_1,...,S_k\}$ of $V$.}
\State Set $G^{'}=G$;
\While {$G^{'}$ has more than $k$ nodes}
\State Randomly select two nodes $u$ and $v$ in $G^{'}$; 
\State Add a new node $u^{*}$ to $G^{'}$;
\For {each edge $(u^{'},u)$ in $G^{'}$ }  
\State Add a new edge $(u^{'},u^{*})$ to $G^{'}$;
\EndFor {each pair $(S_i,S_j)$, $i \neq j$}   
\For {each edge $(u^{'},v)$ in $G^{'}$ }  
\State Add a new edge $(u^{'},u^{*})$ to $G^{'}$;
\EndFor {each pair $(S_i,S_j)$, $i \neq j$} 
\State	Delete $u$, $v$ and all the edges adjacent to $u$ or $v$ from $G^{'}$
\EndWhile
\State Return the partition of $G$ corresponding to the $k$ nodes left in $G^{'}$.
\end{algorithmic}
\end{algorithm}

\begin{figure}[t]
\begin{center}
\includegraphics[width=3in]{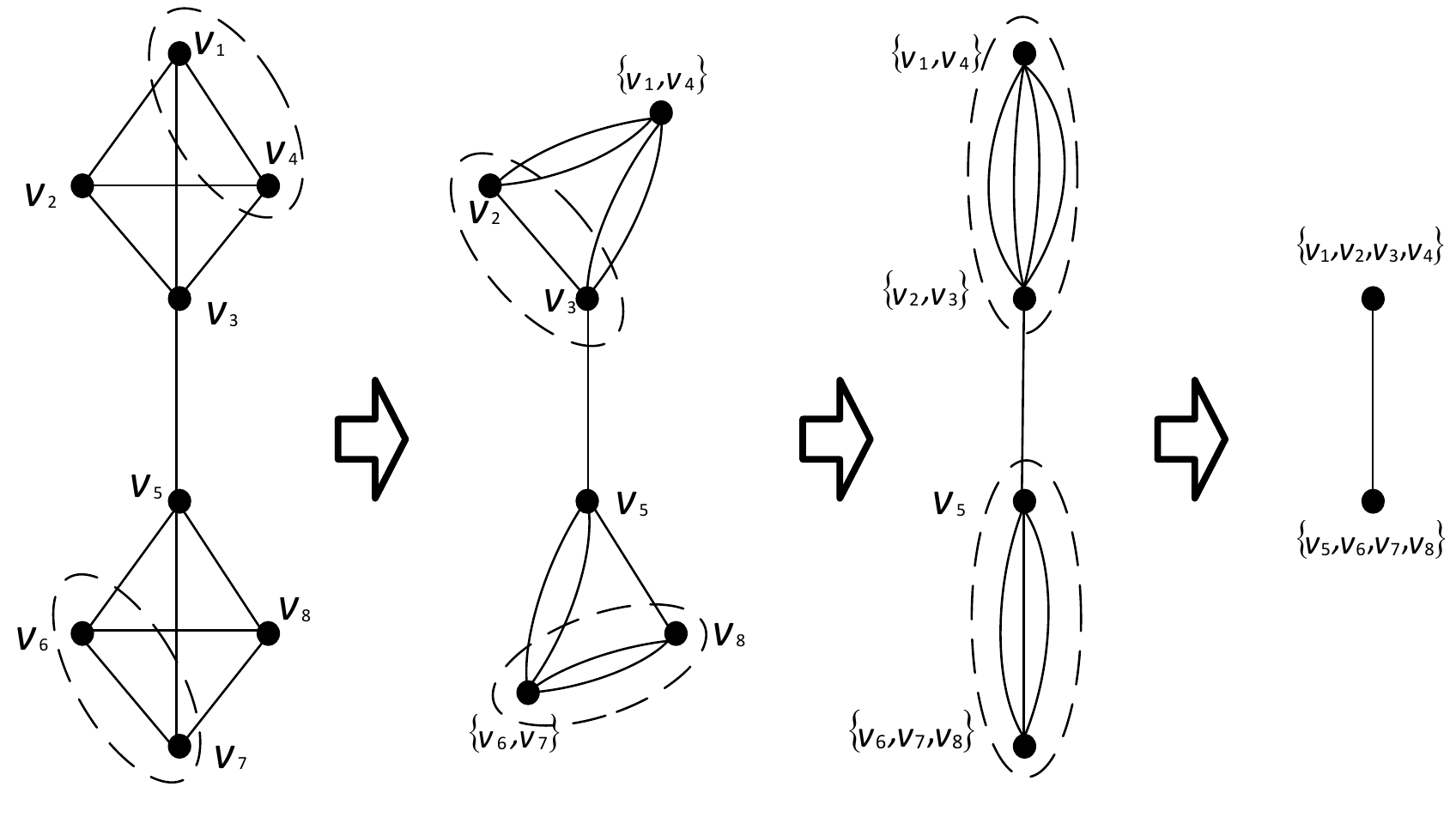} 
\end{center} 
\caption{An  illustration example of the NMBCD algorithm. For simplicity in each step we merge two pairs of nodes enclosed by the dashed circles.}
\label{fig:mergeinstance}
\vspace{-6mm}
\end{figure}

\begin{lemma}
The NMBCD algorithm is able to produce the optimal solution $E^{*}$ with a probability larger than $\frac{k}{{n \choose k-1}{n-1 \choose k-1}}$.
\end{lemma}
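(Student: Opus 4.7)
The plan is to mimic Karger's analysis of randomized contraction from \cite{karger1996new}. The key observation is that the algorithm outputs $E^{*}$ (equivalently, recovers the unique optimal partition $\{S^{*}_{1},\ldots,S^{*}_{k}\}$) if and only if none of the $n-k$ merges ever contracts an edge lying in $E^{*}$. So my task reduces to lower-bounding this avoidance probability via a per-step bound on the chance of touching $E^{*}$, followed by a telescoping product over the merge steps.

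For the per-step bound I would show that, conditioned on every previous merge having avoided $E^{*}$, the probability that the merge performed when $s$ super-nodes remain picks an edge of $E^{*}$ is at most $(k-1)(2s-k)/[s(s-1)]$. For $k=2$ this is Karger's classical $2/s$ bound, obtained by combining $\Pr[\text{bad}]=w(E^{*})/w_{s}$ (where $w_{s}$ is the total edge weight of the contracted graph at the moment $s$ super-nodes remain) with the min-degree inequality $w_{s}\ge w(E^{*})\cdot s/2$. The $k$-cut analogue I would establish is the aggregate inequality $w_{s}\ge w(E^{*})\cdot s(s-1)/[(k-1)(2s-k)]$, proved by summing edge-weight contributions across the $k$ optimal components and exploiting $\sum_{j} cut(S^{*}_{j},\overline{S^{*}_{j}})=2\,w(E^{*})$. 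Once the per-step bound is in hand, multiplying the good probabilities over $s=n,n-1,\ldots,k+1$ yields
\[
\Pr[\text{success}]\;\ge\;\prod_{s=k+1}^{n}\frac{(s-k)(s-k+1)}{s(s-1)}\;=\;\frac{(n-k)!\,(n-k+1)!\,k!\,(k-1)!}{n!\,(n-1)!}\;=\;\frac{k}{\binom{n}{k-1}\binom{n-1}{k-1}},
\]
using $\prod_{s=k+1}^{n}(s-k)=(n-k)!$, $\prod_{s=k+1}^{n}(s-k+1)=(n-k+1)!$, $\prod_{s=k+1}^{n}s=n!/k!$, and $\prod_{s=k+1}^{n}(s-1)=(n-1)!/(k-1)!$ in the middle step, then substituting $\binom{n}{k-1}=n!/[(k-1)!(n-k+1)!]$ and $\binom{n-1}{k-1}=(n-1)!/[(k-1)!(n-k)!]$ in the last.

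The main obstacle is the per-step inequality for $k\ge 3$. Karger's $k=2$ trick (a single super-node's degree is itself a $2$-cut and therefore at least $w(E^{*})$) does not transfer directly, because the degree of a single super-node is still only a $2$-cut in the contracted graph, not a $k$-cut. I would repair this by an aggregation argument: rather than bounding individual super-node degrees by $w(E^{*})$, I would bound the sum of edge-weight contributions incident to super-nodes inside each $S^{*}_{j}$ using that any strictly cheaper boundary for $S^{*}_{j}$ would yield a $k$-partition contradicting the optimality of $E^{*}$, then sum over $j$ to recover the required lower bound on $w_{s}$. Once that inequality is in place, the telescoping calculation above is purely routine bookkeeping.
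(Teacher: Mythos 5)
Your telescoping computation is correct and the product does collapse to the stated bound, but there are two genuine problems. First, you are analyzing a different random process from the one NMBCD actually runs. The algorithm (and the paper's own proof) selects a \emph{uniformly random pair of nodes} of $G'$, not an edge with probability proportional to its weight; consequently the relevant per-step quantity is $\sum_i \binom{s_i}{2}/\binom{s}{2}$, where $s_i$ is the number of surviving super-nodes of $S_i^*$, not $w(E^*)/w_s$, and the failure event is ``merge two nodes lying in different $S_i^*$,'' which can occur even when no edge of $E^*$ is contracted (the two chosen nodes need not be adjacent). Your entire per-step machinery ($\Pr[\text{bad}]=w(E^*)/w_s$ and the lower bound on $w_s$) presupposes weight-proportional edge contraction and therefore does not apply to the algorithm as written. (As it happens, the paper's node-pair analysis is itself shaky: it claims the minimum of $\sum_i\binom{s_i}{2}$ subject to $\sum_i s_i=s$, $s_i\ge 1$ is attained at $s_1=\cdots=s_{k-1}=1$, but that configuration \emph{maximizes} this convex function; the minimum is at the balanced configuration, where the per-step success probability is only about $1/k$ and the product is exponentially small. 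So the edge-contraction route you chose is arguably the only one that can deliver the claimed bound --- but then the statement being proved concerns Karger-style edge contraction rather than NMBCD as specified.)

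Second, the step you yourself flag as the main obstacle is indeed not closed, and your proposed repair does not work. For $k\ge 3$ you cannot argue that ``a strictly cheaper boundary for $S_j^*$ yields a $k$-partition contradicting optimality'': replacing one part by a set with a smaller boundary does not in general produce a valid $k$-partition, and summing $\sum_j cut(S_j^*,\overline{S_j^*})=2\,w(E^*)$ over the optimal parts gives information about those parts, not about the degrees of arbitrary super-nodes in the contracted graph. The standard (and, as far as I can see, necessary) argument is different: pick $k-1$ of the $s$ surviving super-nodes uniformly at random and isolate them; this always yields a $k$-cut, its expected weight is $w_s\bigl(1-\binom{s-k+1}{2}/\binom{s}{2}\bigr)=w_s\,(k-1)(2s-k)/(s(s-1))$, and since the minimum $k$-cut is at most this expectation you obtain exactly the inequality you need. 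With that substitution (and with the algorithm understood as weight-proportional edge contraction) your telescoping argument goes through.
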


\begin{proof}
Suppose the optimal partition is $C^{*}=\{S_1,..., S_k\}$ and $E^{*}$ is set of the cut-edges of $C^{*}$. It is easy to see that the NMBCD algorithm produces $\{S_1,..., S_k\}$ if and only if we never merge two nodes that are respectively in two different sets of $\{S_1,..., S_k\}$. In other words, we always merge the two nodes that are in the same community. Suppose this is true for the first $r-1$ step. In step $r$, there are $n-r$ nodes left in $G$ and suppose there are $s_i^{r}$ nodes left in community $S_i$. Now we select two nodes to merge. The probability that the selected two nodes are in the same $S_i$ is
\begin{equation*}
\dfrac{{s_1^{r} \choose 2}+...+{s_k^{r} \choose 2}}{{n-r \choose 2}}.
\end{equation*}
Since $\sum_{i=1}^{k}s_i^{r}=n-r$, the lower bound of the above probability is reached by setting $s_1^{r}=...=s_{k-1}^{r}=1$ and $s_{k}^{r}=n-r-k+1$. Thus, this probability is larger than
\begin{equation*}
\frac{(n-r-k+1)(n-r-k)}{(n-r)(n-r-1)}.
\end{equation*} 
Therefore, the probability that it finally  produces the optimal partition $C^{*}$ is larger than
\begin{eqnarray}
\label{eq:general_prob}
&& \prod_{r=0}^{r=n-k-1}\frac{(n-r-k+1)(n-r-k)}{(n-r)(n-r-1)} \nonumber \\
&=& \dfrac{k}{{n \choose k-1}{n-1 \choose k-1}}=\Omega((k/n)^{2 \cdot (k-1)}).
\end{eqnarray} 
\end{proof}
According to Eq. (\ref{eq:prob}) we will find the optimal partition in $O((k-1) \cdot n^{2 \cdot k-1})$  with a high probability. One can see that the above randomized algorithm is easy to implement and it can be highly parallelized.

\subsection{Equal-sized community detection}
\label{subsec:equal}
In this section we consider a special case of the customized community detection. Sometimes it is desired to have a community detection where the communities have the similar size. For example, when distributing a large social network into different machines for parallel processing, on the one hand we need to minimize the cut-edges to reduce the communication between machines while on the other hand the total processing time depends on the slowest machine and thus an equal sized partition is desirable. Without loss of generality we assume $n=q \cdot k$ for some integer $q$. Thus, we have the following problem.

\begin{problem}{\textbf{Equal-sized Community Detection Problem}}
\label{problen:k-equa-size-partition}
\begin{equation*}
\begin{aligned}
& {\text{minimize}}
& & \sum_{i=1}^{k} cut(S_i,\overline{S_i}) \\
& \text{subject to} & & S_i \cap S_j = \emptyset, \; i \neq j,\\
& & & \bigcup_{i=1}^{k} S_i  = V, \\
& & & |S_i| = n/k, \; i = 1, \ldots, k,\\
& & & S_i \subseteq V, \; i = 1, \ldots, k.
\end{aligned}
\end{equation*}
\end{problem}

Instead of immediately working on Problem \ref{problen:k-equa-size-partition}, we first consider the case that we have already known a terminal set $V_k=\{v_1, ..,v_k\}$ of the optimal partition. With such a terminal set, our problem changes to the following.

\begin{problem}
\label{problen:k-equa-size-partition-with-terminal}
\begin{equation*}
\begin{aligned}
& {\text{minimize}}
& & \sum_{i=1}^{k} cut(S_i,\overline{S_i}) \\
& \text{subject to} & & S_i \cap S_j = \emptyset, \; i \neq j,\\
& & & \bigcup_{i=1}^{k} S_i  = V, \\
& & & |S_i| = n/k, \; i = 1, \ldots, k, \\
& & & v_i \in  S_i, \; i = 1, \ldots, k.
\end{aligned}
\end{equation*}
\end{problem}

For problem, we can always consider the complete graph by adding zero-weighted edges to the original graph. The equal-sized community detection problem has a useful property shown as follows.

\begin{property}
\label{property:same-cut-edge}
Each feasible solution to Problem \ref{problen:k-equa-size-partition-with-terminal} has the same number of cut-edges. 
\end{property}
\begin{proof}
Let $C=\{S_1,...,S_k\}$ be a partition where $|S_i|=n/k$ for each $i$. The number of cut-edges in $C$ is $\frac{n}{2}(n-\frac{n}{k})$
\end{proof}

According to Property \ref{property:same-cut-edge}, Problem \ref{problen:k-equa-size-partition-with-terminal} remains unchanged if we add the same weight to each edge. In particular, we add $w^{*}$ to each edge where $w^{*}$ is the maximum weight of the edges in the original graph (i.e., $w^{*}=\text{max}(w(u,v))$ ). By doing so we obtain a weighted graph where the triangle inequality holds while without changing the problem. In this section, we assume the triangle inequality holds in the equal-sized community detection problem. 

Given a weighted graph where the triangle inequality holds and a vertex set $V_k=\{v_1,...,v_k\}$, the following min-star problem is helpful in solving Problem \ref{problen:k-equa-size-partition-with-terminal}.

\begin{problem}{\textbf{Min-star Problem.}} Let $\{a_1,...,a_{n-k}\}=V \setminus V_k$.
\label{problen:min-star}
\begin{equation*}
\begin{aligned}
& {\text{minimize}}& & \sum_{i=1}^{k} \sum_{j=1}^{n-k} \big(\sum_{r=1,r\neq i}^k \frac{n}{k} \cdot w(v_r,a_j)\big)x_{ij}  \\
& \text{subject to} 		   & & \sum_{i=1}^{k} x_{ij}\leq 1, \; j= 1, \ldots, n-k,\\
&							   & & \sum_{j=1}^{n-k}x_{ij}=\frac{n}{k}-1, \; i= 1, \ldots, k,\\
& 							   & & x_{ij} \in \{0,1\}, \; i=1,...,k, j=1,...,n-k. \\
\end{aligned}
\end{equation*}
\end{problem}

By the analysis of T. Tokuyama \cite{tokuyama1995geometric}, the min-star problem can be solved in $O(n)$. According to {\cite{guttmann2000approximation}}, the following result shows the connection between the min-star problem and Problem \ref{problen:k-equa-size-partition-with-terminal}.

\begin{lemma}
\label{lemma:min-5}
Suppose $\{x_{ij}\}$ is the solution to the above Min-star problem with an input $V_k=\{v_1,...v_k\}$. Let $S_i=\{v_i\} \cup \{a_j|1 \leq j \leq n-p,x_{ij}=1 \}$ for $ i=1,...,k$. Then $C=\{S_1,...,S_k\}$ is a 3-approximation of Problem \ref{problen:k-equa-size-partition-with-terminal} with the same input.

\end{lemma}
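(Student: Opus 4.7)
The plan is to argue via the triangle-inequality assumption (justified by Property \ref{property:same-cut-edge}) that the Min-star objective captures the true equal-sized cut cost up to a factor of three. The argument has a ``sandwich'' structure: on one side, the optimal partition induces a feasible Min-star assignment whose cost upper-bounds $\text{MS}(\{x_{ij}\})$; on the other side, the Min-star optimum can be translated back into a partition whose cut cost is controlled by $\text{MS}(\{x_{ij}\})$.

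First I would fix an optimal solution $C^{*}=\{S_1^{*},\dots,S_k^{*}\}$ to Problem \ref{problen:k-equa-size-partition-with-terminal} with $v_i \in S_i^{*}$, and define $x_{ij}^{*}=1$ iff $a_j \in S_i^{*}$. Since each $|S_i^{*}|=n/k$, this $x^{*}$ is feasible for the Min-star problem, so by the optimality of $\{x_{ij}\}$ we immediately get $\text{MS}(\{x_{ij}\}) \leq \text{MS}(\{x_{ij}^{*}\})$.

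Next I would translate $\text{MS}(\{x^{*}\})$ into a cut bound. For any $a_j \in S_i^{*}$ and $r \neq i$, applying the triangle inequality to each $b \in S_r^{*}$ yields $w(v_r,a_j) \leq w(v_r,b) + w(b,a_j)$; summing over the $n/k$ choices of $b$ gives $\frac{n}{k}\,w(v_r,a_j) \leq \sum_{b \in S_r^{*}} w(v_r,b) + \sum_{b \in S_r^{*}} w(a_j,b)$. Summing this inequality over $r \neq i$ and $a_j \in S_i^{*}\setminus\{v_i\}$ converts the right-hand side into a weighted sum of cut-edges of $C^{*}$ plus a residual of terminal-to-cluster terms. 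The same triangle-inequality idea, applied in the reverse direction to each cross-cluster pair of nodes in the partition $C$ obtained from $\{x_{ij}\}$, bounds $\text{cut}(C)$ by a constant multiple of $\text{MS}(\{x_{ij}\})$, completing the chain $\text{cut}(C) \leq f(\text{MS}(\{x_{ij}\})) \leq f(\text{MS}(\{x_{ij}^{*}\})) \leq 3\cdot\text{cut}(C^{*})$.

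The hard part will be the bookkeeping that makes the final constant exactly three. Each triangle-inequality expansion pulls in auxiliary within-cluster and terminal-to-terminal terms, and the constant $3$ only emerges if those extra terms can be charged back to edges of $\text{cut}(C^{*})$ without double-counting. This is precisely the charging scheme worked out in \cite{guttmann2000approximation} for min-sum clustering under the triangle inequality, and I would carry over that scheme essentially unchanged once the Min-star problem in Problem \ref{problen:min-star} is identified with their assignment relaxation.
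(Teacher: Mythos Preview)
Your proposal is aligned with the paper, but it is worth noting that the paper does not actually supply a proof of this lemma: it simply attributes the result to \cite{guttmann2000approximation} and states the conclusion. Your write-up does the same in the end---you defer the delicate charging argument that produces the constant~$3$ to \cite{guttmann2000approximation}---but you additionally sketch the sandwich structure (optimal partition gives a feasible Min-star assignment; triangle inequality transfers Min-star cost to cut cost), which is more than the paper offers. So there is nothing to correct: both you and the paper rely on the same external source, and your outline of the mechanism is consistent with the Guttmann-Beck--Hassin argument the paper is invoking.
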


Lemma \ref{lemma:min-5} implies that suppose one of the terminal sets of the optimal partition is known to us, we have already had an efficient approximation algorithm for the equal-sized community detection problem. In some real cases the terminal set of the optimal partition is obtainable, especially for the featured social networks. Even if we do not have any prior knowledge of the addressed social network, as shown in the following, the random sampling process is a powerful tool. If we randomly select a $V_k=\{v_1,...,v_k\}$ from $V$, the probability that $V_k$ is a terminal set of the optimal partition is 
\begin{equation}
\label{eq:eq_prob}
\dfrac{(\dfrac{n}{k})^{k}}{{n \choose k}} \geq (1/k)^k.
\end{equation} 
Because each community has the same size, the probability in Eq. (\ref{eq:eq_prob}) is a constant in respect of $n$. By Eq. (\ref{eq:prob}), we can find a 3-approximation of Problem \ref{problen:k-equa-size-partition} in $O(k^{k+1} \ln k)$ with a high probability. Together with the running time of solving the min-star problem, the whole approach runs in $O(n)$ for fixed $k$. Note that for the equal-sized partition problem we do not need to concern the scenario in Fig. \ref{fig:example} as we have explicitly imposed size restrictions on communities. This approach is shown in Algorithm \ref{alg:KTESCD}. In practice, we may stop the approach after running sufficient number of iterations or a satisfied solution has been produced.

\begin{algorithm}[t]
\caption{ \textbf{Equal-sized community detection$(G,k)$}}\label{alg:KTESCD}
\begin{algorithmic}[1]
\State \textbf{Input}: \small{Network $G=(V,E)$, the number of clusters $k$.}
\State \textbf{Output}: \small{A partition $C=\{S_1,...,S_k\}$ of $V$.}
\State Set $w^{*}$ to be maximum edge weight.
\For {$(u,v) \in E$} $w(u,v)=w(u,v)+w^{*}$;
\EndFor
\While {stop criteria}
\State Sample a set $V_k=\{v_1,...,v_k\}$ from $V$; 
\State Set $\{x_{ij}\}$=min-star($G,k,V_k$);
\State Set $C=\{S_1,...,S_k\}$, 
\State ~~~~~~where $S_i=\{v_i\} \cup \{a_j|1 \leq j \leq n-p,x_{ij}=1 \}$;
\EndWhile
\State Return the $C$ with the minimum cut-edges;
\end{algorithmic}
\end{algorithm}

\section{Experiments}
\label{sec:experiments}
For the original community detection problem, the method to examining a community detection algorithm has been well established by researchers. In this section we evaluate the proposed community detection algorithm TSECD-D under different measures. Besides, we will discuss the parameter setting of the TSECD-D algorithm.

\subsection{Experimental Setup}
Our experiments are performed on a desktop PC with 16 GB ram and a 3.6 GHz quadcore processor running 64-bit JAVA VM 1.6. The visualizations of the networks are achieved by Cytoscape.
  
We use three datasets, Zachary karate club network \cite{zachary1977information}, Girvan-Newman benchmark network \cite{girvan2002community} and the artificial networks generated by LFR benchmark \cite{lancichinetti2008benchmark}. The Zachary karate club network includes 34 nodes and has a latent structure of two communities. It has been used in many prior works for testing community detection algorithms \cite{duch2005community,li2008quantitative,du2007community}. Girvan-Newman network contains four communities where each community has 42 nodes. This is one of the first benchmarks for the community detection problem.  The LFR benchmark is a generalization of the  Girvan-Newman benchmark and it can generate community-structured networks with any size. In our experiments, the size of the LFR benchmark networks ranges from 100 to 2000 vertices\footnote{As shown in \cite{leskovec2010empirical}, the community at a scale of hundreds nodes is the most community-like. Thus, we do not consider the networks in very large scales.} and the number of the communities are set as 2 and 4. The details of the LFR benchmark can be found in \cite{lancichinetti2008benchmark}. 

As aforementioned, the TSECD-D algorithm has two parameters $p$ and $l$ where $p$ specifies how many node shall be selected for guessing the terminal set of the optimal partition and $l$ identifies the radius of the local area used in searching the minimum cut. Instead of listing the results of all the performed experiments, we select the following representative settings for illustration. 
\begin{itemize}
\item \textbf{Setting 1}: $p$ is ten times of $k$ and $l=\lfloor\frac{n}{k}\rfloor$. 
\item \textbf{Setting 2}: $p$ is ten times of $k$ and $l=\lfloor\frac{n}{8 \cdot k}\rfloor$. 
\item \textbf{Setting 3}: $p$ is two times of $k$ and $l=\lfloor\frac{n}{2 \cdot k}\rfloor$.
\item \textbf{Setting 4}: $p$ is ten times of $k$ and $l=\lfloor\frac{n}{2 \cdot k}\rfloor$. 
\end{itemize}
For a community $S$, let $n_s$ be the number of nodes in $S$, $m_S$ be the number of edges within in $S$ (i.e., $m_S=|(u,v): u\in S, v\in S|$), and $c_S$ be the number of cut-edges of $S$ (i.e., $c_S=|(u,v): u\in S, v\notin S|$). To evaluate the produced partitions, the quality of a community $S$ is measured by the following popular metrics \cite{leskovec2010empirical}. 
\begin{itemize}
\item \textbf{Conductance}: $f(S)=\frac{c_S}{2 \cdot m_S+c_S}$ measures the fraction of total edge volume that point outside the community.
\item  \textbf{Expansion}: $f(S)=\frac{c_S}{n_S}$ measures the number of edges per nodes that point outside the community.
\item  \textbf{Cut Ratio}: $f(S)=\frac{c_S}{n_S \cdot (n-n_S)}$ measures the fraction of all possible edges pointing outside the community.
\item  \textbf{Normalized Cut}: $f(S)=\frac{c_S}{2 \cdot m_S+c_S}+\frac{c_S}{2 \cdot (m-m_S)+c_S}$ \cite{shi2000normalized}.
\item  \textbf{Average-ODF}: $f(S)=\frac{1}{n_S}\sum_{u \in S}\frac{|(u,v): v \notin S|}{d(u)}$ measures the average fraction of nodes' edges pointing outside the community.
\item  \textbf{Internal Density}: $f(S)=1-\frac{2 \cdot m_S}{n_S(n_S-1)}$ measures internal edge density of the community.
\end{itemize}
For the above metrics, low value of $f(S)$ implies a  high-quality community. 

\begin{figure*}[t]
\subfloat[Conductance]{\label{fig:2_conductance}\includegraphics[width=0.32\textwidth]{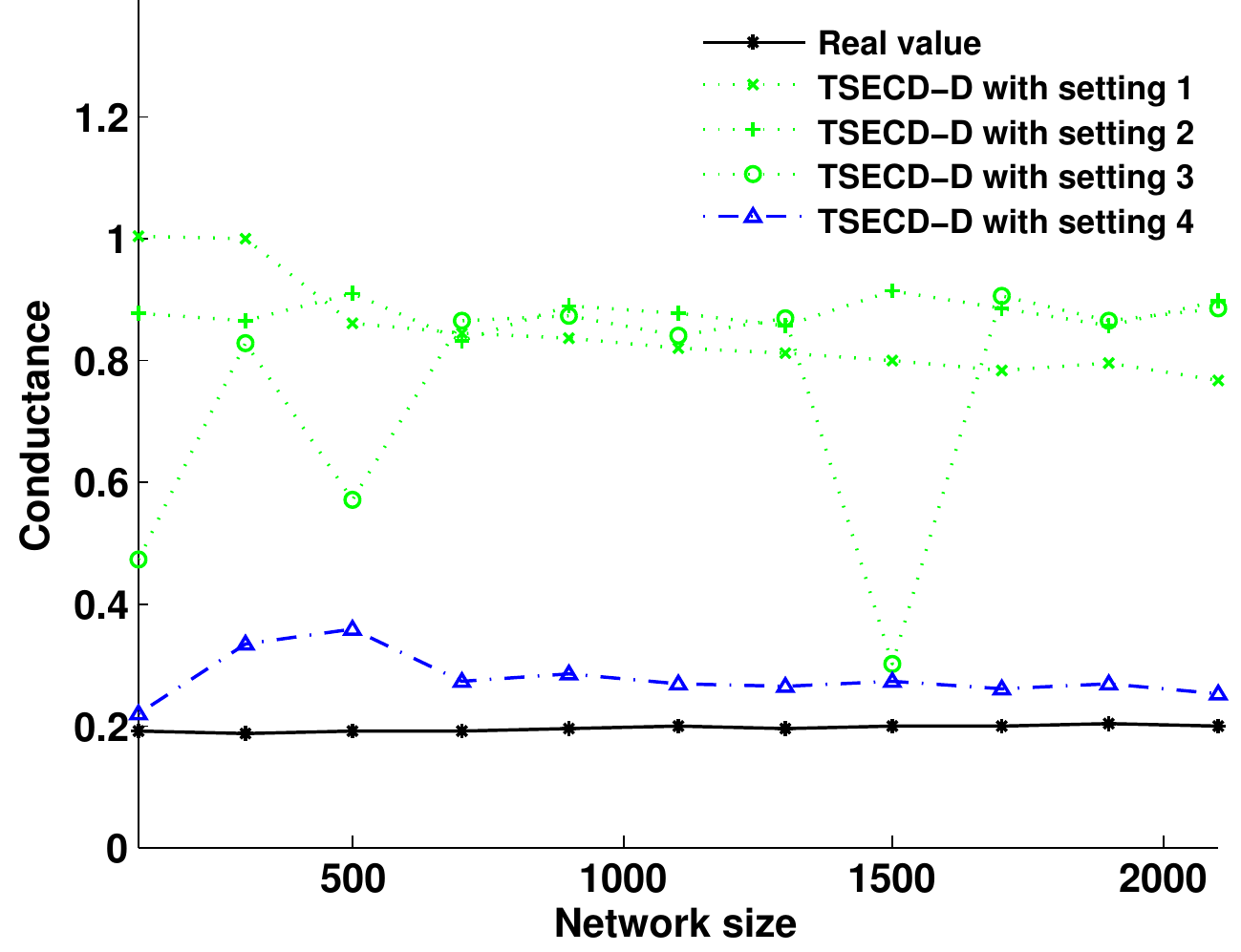}} \hspace{0mm}
\subfloat[Expansion]{\label{fig:2_expansion}\includegraphics[width=0.32\textwidth]{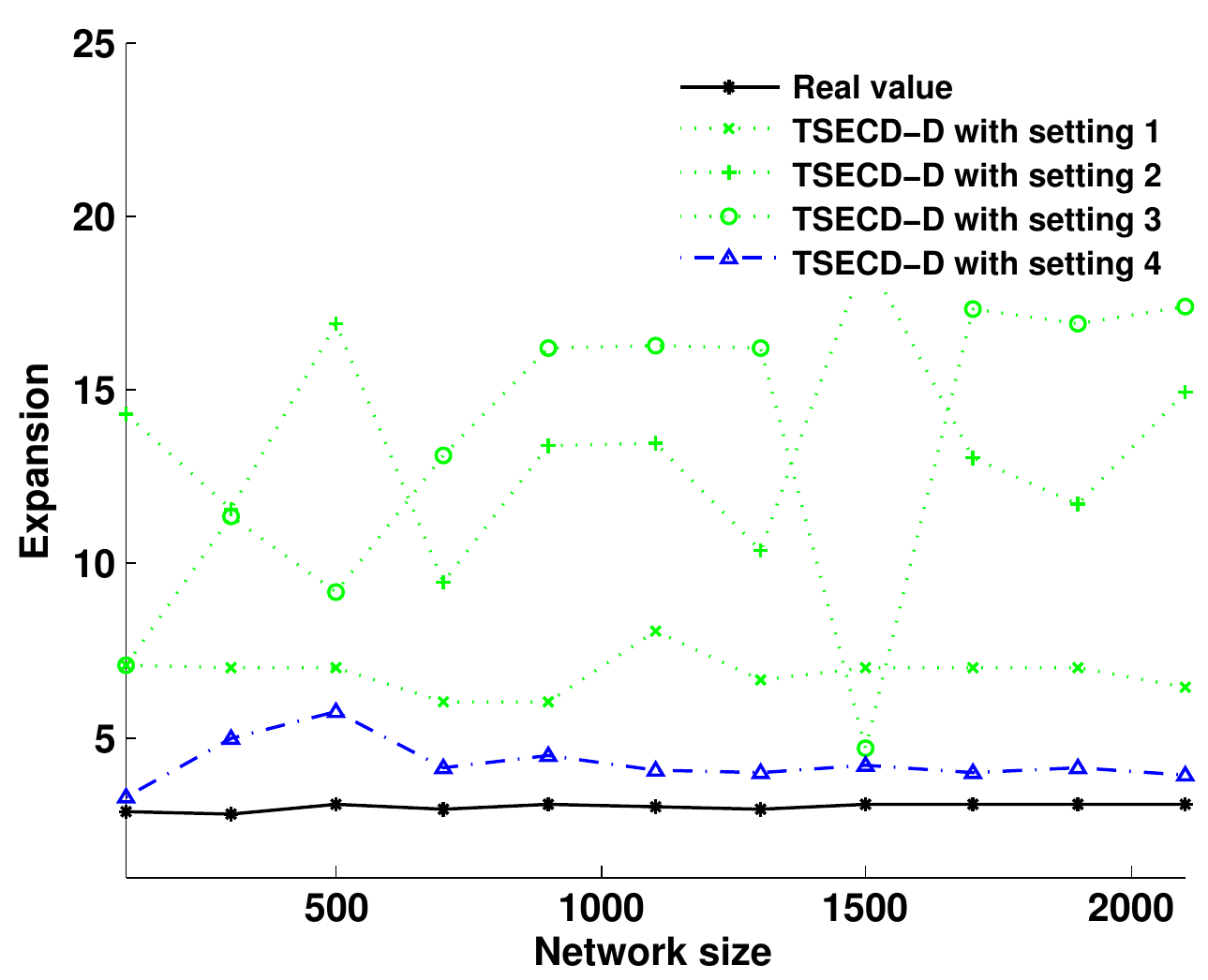}} \hspace{0mm}
\subfloat[Cut Ratio]{\label{fig:2_cutratio}\includegraphics[width=0.32\textwidth]{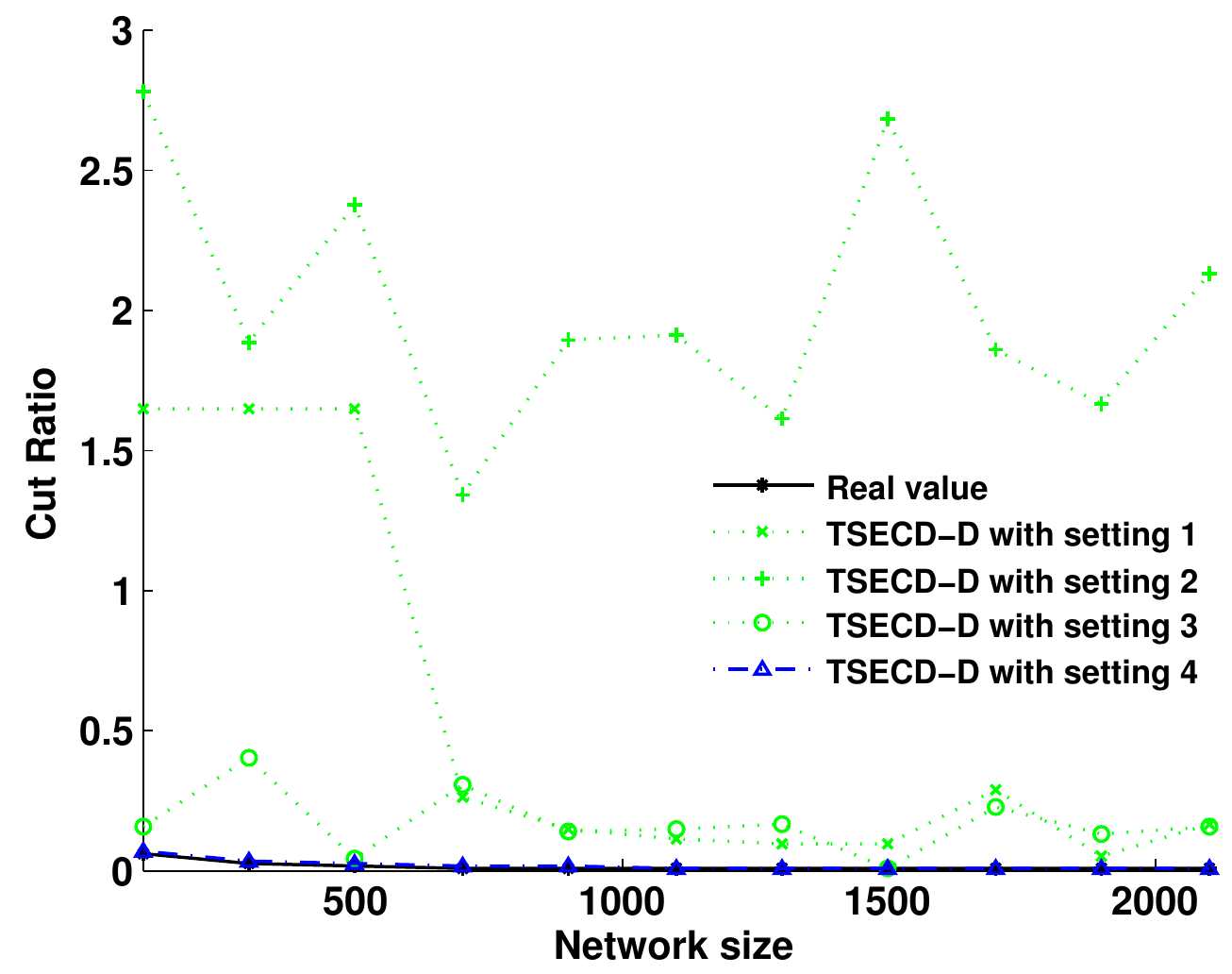}} \hspace{0mm}

\subfloat[Normalized Cut]{\label{fig:2_ncut}\includegraphics[width=0.32\textwidth]{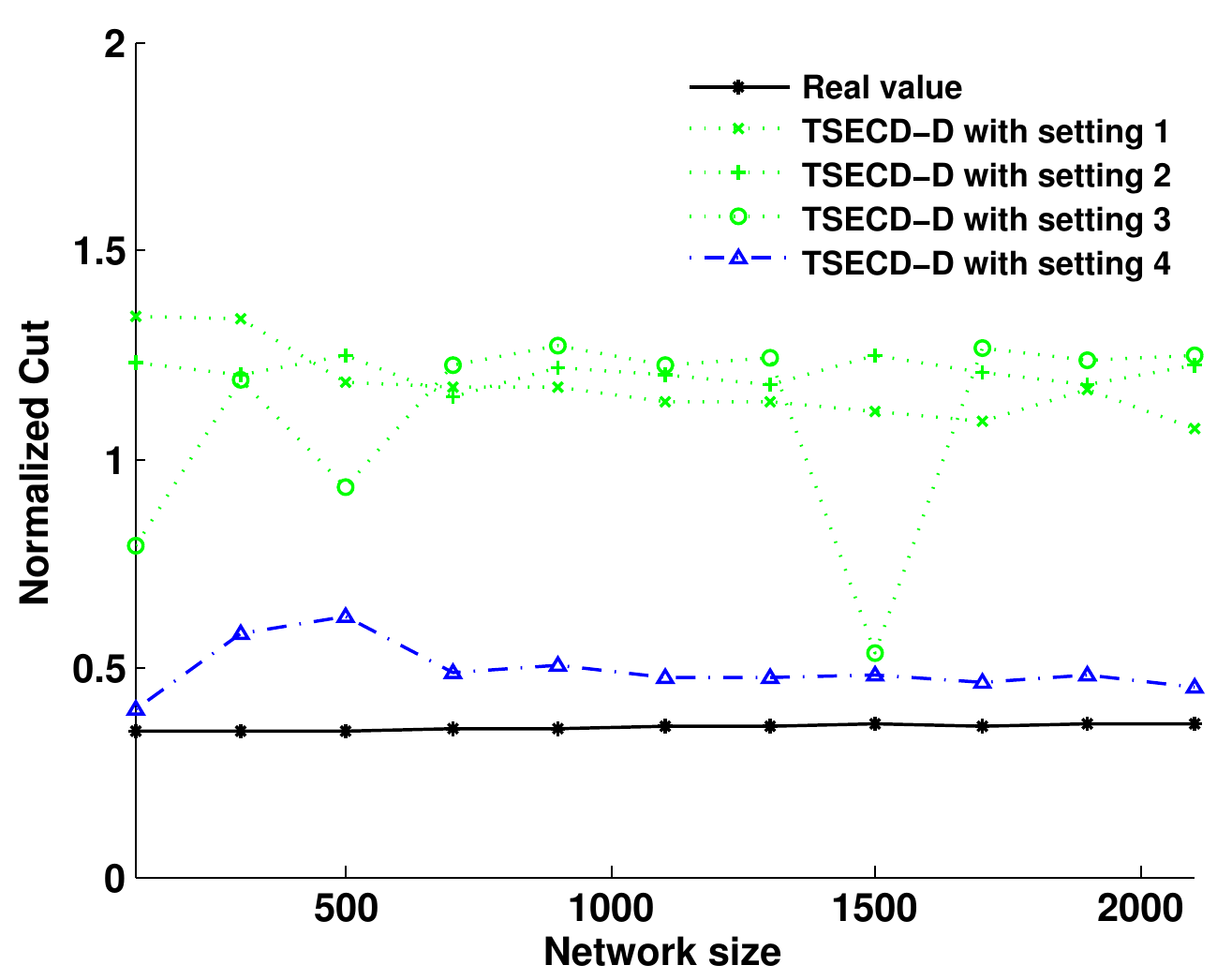}} \hspace{0mm}
\subfloat[Average-ODF]{\label{fig:2_aodf}\includegraphics[width=0.32\textwidth]{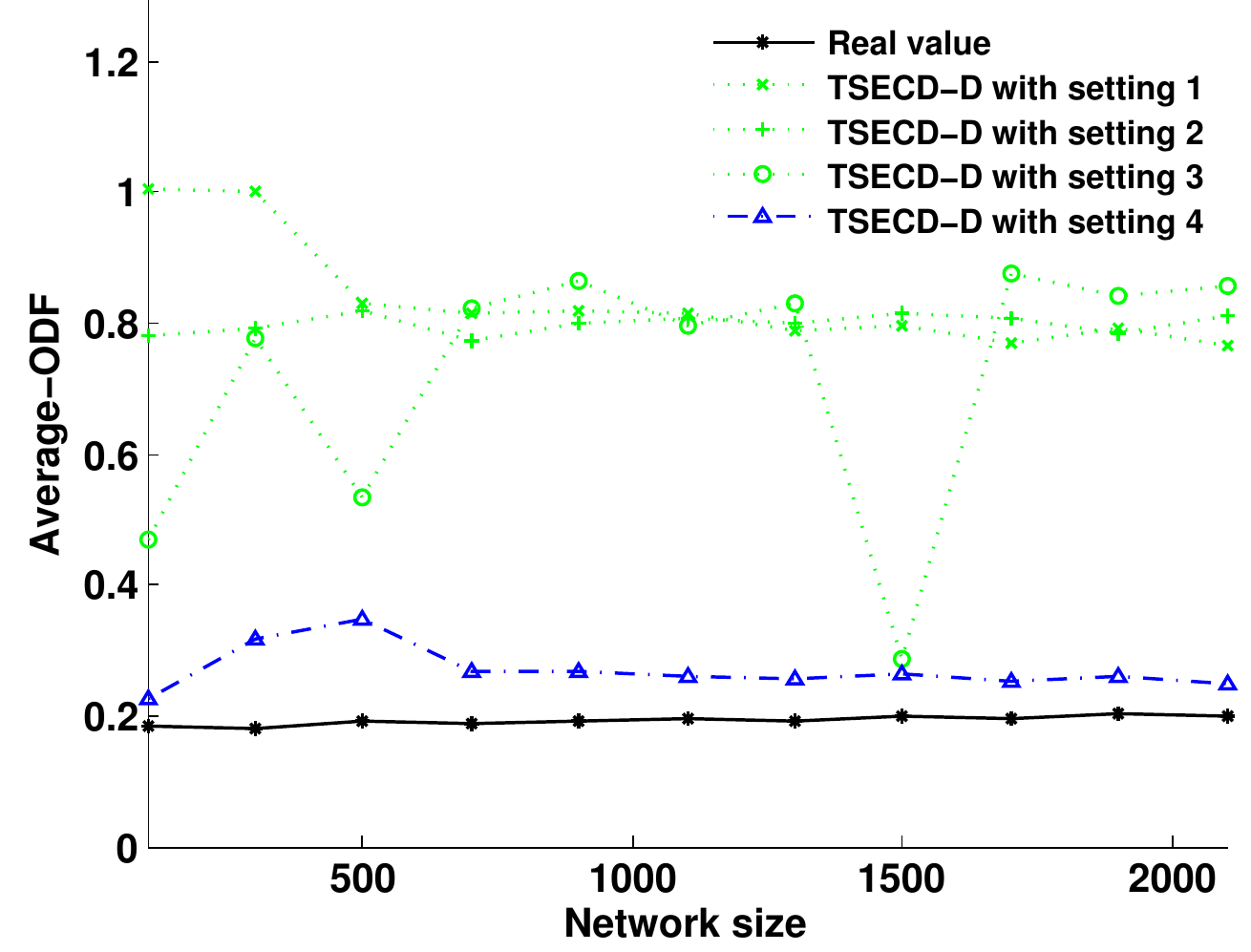}} \hspace{0mm}
\subfloat[Intenal Density]{\label{fig:2_idensity}\includegraphics[width=0.32\textwidth]{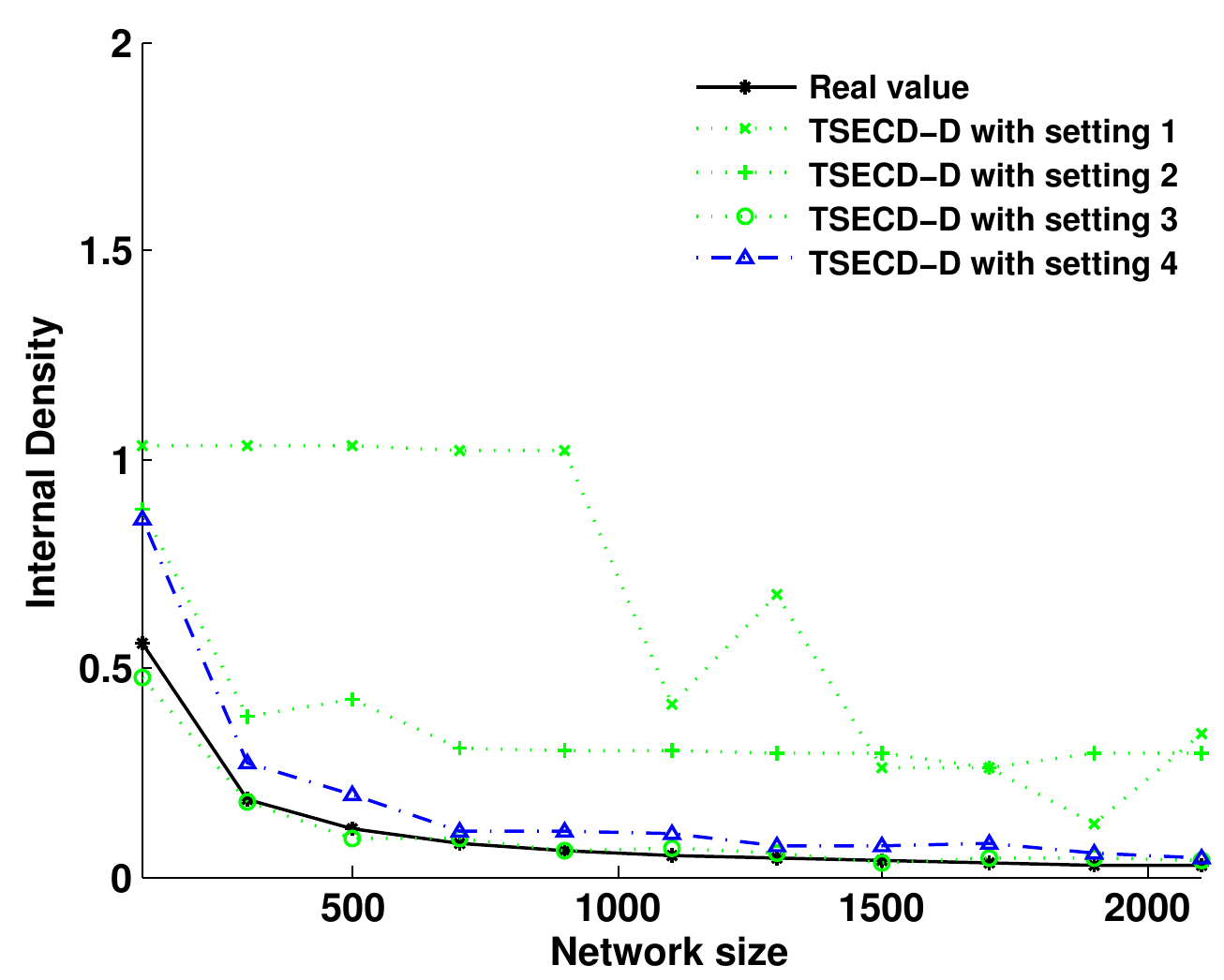}} \hspace{0mm}

\caption{\small Results for $k=2$ on the LFR benchmark networks. In all six graphs, the y-axis and x-axis denote the value of $f(S)$ under different measures and the network size, respectively. Each graph gives
five curves plotting the real value of $f(S)$ and the scores of the communities produced by the TSECD-D algorithm under the four settings, respectively. } 
\vspace{-2mm} \normalsize
\label{fig:k=2}
\vspace{-4mm} \normalsize
\end{figure*}

\subsection{Results on LFR benchmark network}
\label{subsec:LFR}
First we discuss the experimental results on the LFR benchmark networks for $k=2$. A shown in Fig. \ref{fig:k=2}, TESCD-D has the best performance under Setting 4, and the corresponding criterion scores are very close to the real values. Comparing Setting 3 to Setting 4 we can see that although a large $p$ brings us a broad set for guessing the terminal set it is sufficient to set $p$ as ten times of $k$ for the generated networks. When $p$ is excessively large, the marginal benefit becomes very small and the whole process can be very time consuming. Furthermore, the comparison between Settings 1, 2 and 4 implies that the size of the local areas cannot be either too large or too small. This is intuitive because when $l$ is too large the local area of a node $v_i$ in $V_k$ can expand beyond the community of $v_i$ and thus the TSECD-D algorithm will produce a community larger than the true community. While $l$ is too small, the TSECD-D algorithm will produce a community with a small size as the out-edges of the local area is less than the cut-edges, as shown in Figs. \ref{fig:example} and \ref{fig:nodevsset}. For the generated benchmark networks, setting $l$ as the half of the average community size is nearly the optimal. However, in practice we should test different settings for the best performance. The detailed analysis is shown as follows.

\subsubsection{Conductance, normalized cut and average-ODF}
\label{subsubsec:1}
The results of conductance, normalized cut and average-ODF are shown in Figs. \ref{fig:2_conductance}, \ref{fig:2_ncut} and \ref{fig:2_aodf}. For each of these metrics, there is a significant gap of the scores between the high-quality partitions and low-quality partitions. For example, the conductance of a high-quality partition is near 0.3 while it is about 0.8 for the partitions produced under Setting 1, 2 and 3. For these three metrics, the low-quality partitions usually have the same scores regardless of the patterns of the partitions.

\subsubsection{Expansion and cut ratio}
The results of expansion and cut ratio are in shown in Figs. \ref{fig:2_expansion} and \ref{fig:2_cutratio}. Similarly we can see that Setting 4 produces the partitions which are the closest to the ground truth. In contrast to those three metrics discussed in Sec. \ref{subsubsec:1}, the scores of the these two metrics are not stable for low-quality partitions and they fluctuate widely. Therefore for the generated benchmark networks expansion and cut ratio are pattern sensitive. However, these two metrics are still convincing for measuring community quality as the low-quality communities usually have the scores larger than twice of that of the high-quality communities. 

\subsubsection{Internal density}
As shown in Fig. \ref{fig:2_idensity}, although the ground truth partition has the lowest value of internal density, there is no significant gap of the scores between good partitions and bad partitions. For example, the partitions produced under Setting 3 have the scores that are very close to the real values but they have low-qualities under other measures. Such a scenario suggests that internal density should not be used singly as a community quality measure. 
 
\subsection{Results on Zachary karate club network}
The result on the Zachary karate club network is shown in Fig. \ref{fig:karate}. In this figure, the true partition of the Zachary karate club network is specified by different colors and the partition produced by the TSECD-D algorithm under Setting 4 is shown by different shapes. The ground truth communities in Zachary karate club network consist of 16 and 18 members, respectively. One community is centered at node 11 and the other is built around nodes 33 and 34. Zachary karate club network has a clear community structure and has become a part of the standard test on community detection algorithms. As shown in the figure, the TSECD-D algorithm is able to produce an accurate detection and only node 3 is misclassified. In fact, node 3 connects the two communities and the community quality hardly changes if we move node 3 from one community to the other. Thus, the network topology can hardly tell us the community information of node 3. The resulted partition has the conductance of 0.256 which is very close to the real value.

\begin{figure*}[t]
\captionsetup{justification=centering}
~~~~~~~~~~~\subfloat[Zachary karate club network]{\label{fig:karate}\includegraphics[width=0.35\textwidth]{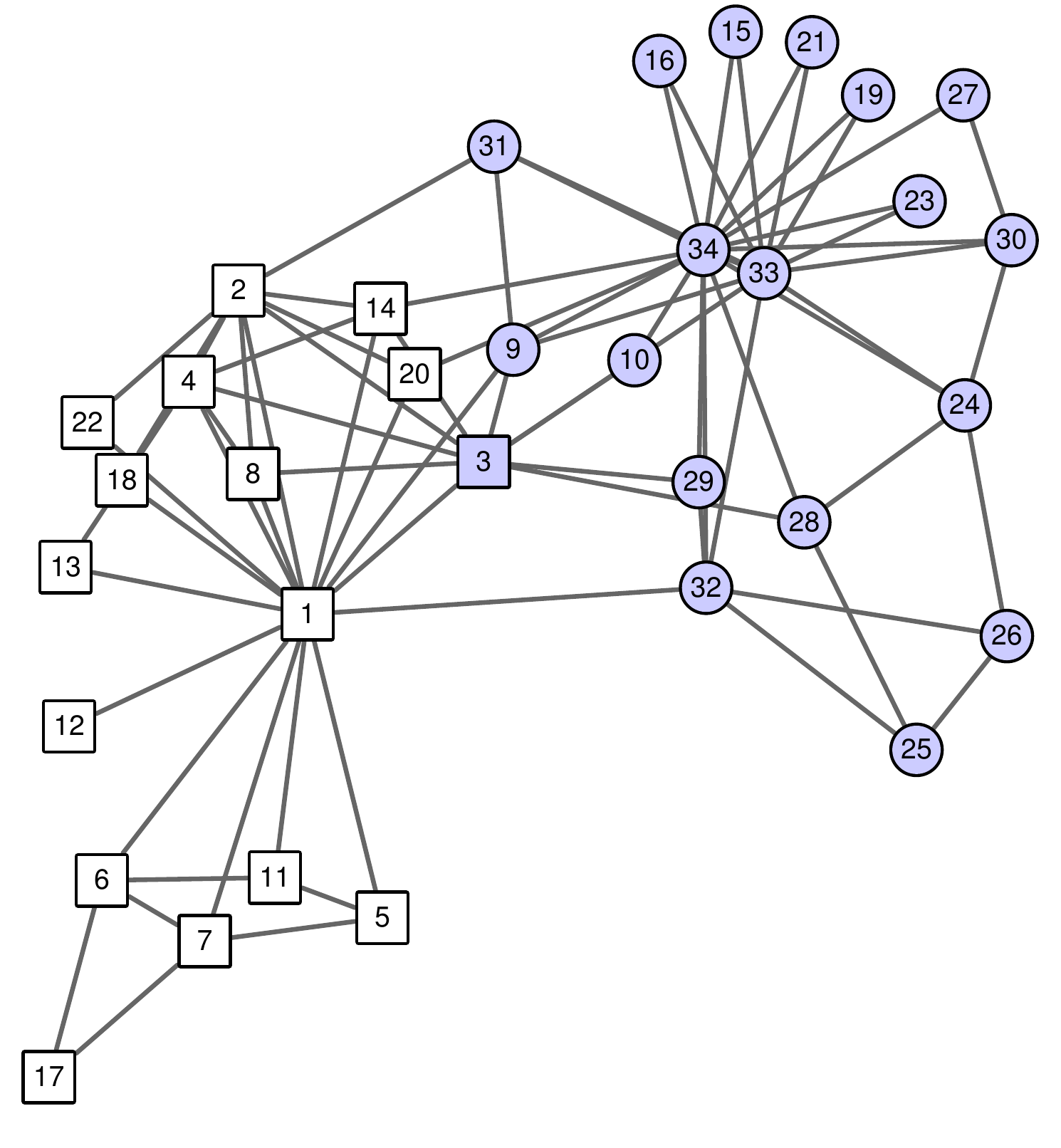}} \hspace{0mm}~~~~~~~~~~~~~~~~~~~~~~
\subfloat[Girvan-Newman network]{\label{fig:newman}\includegraphics[width=0.35\textwidth]{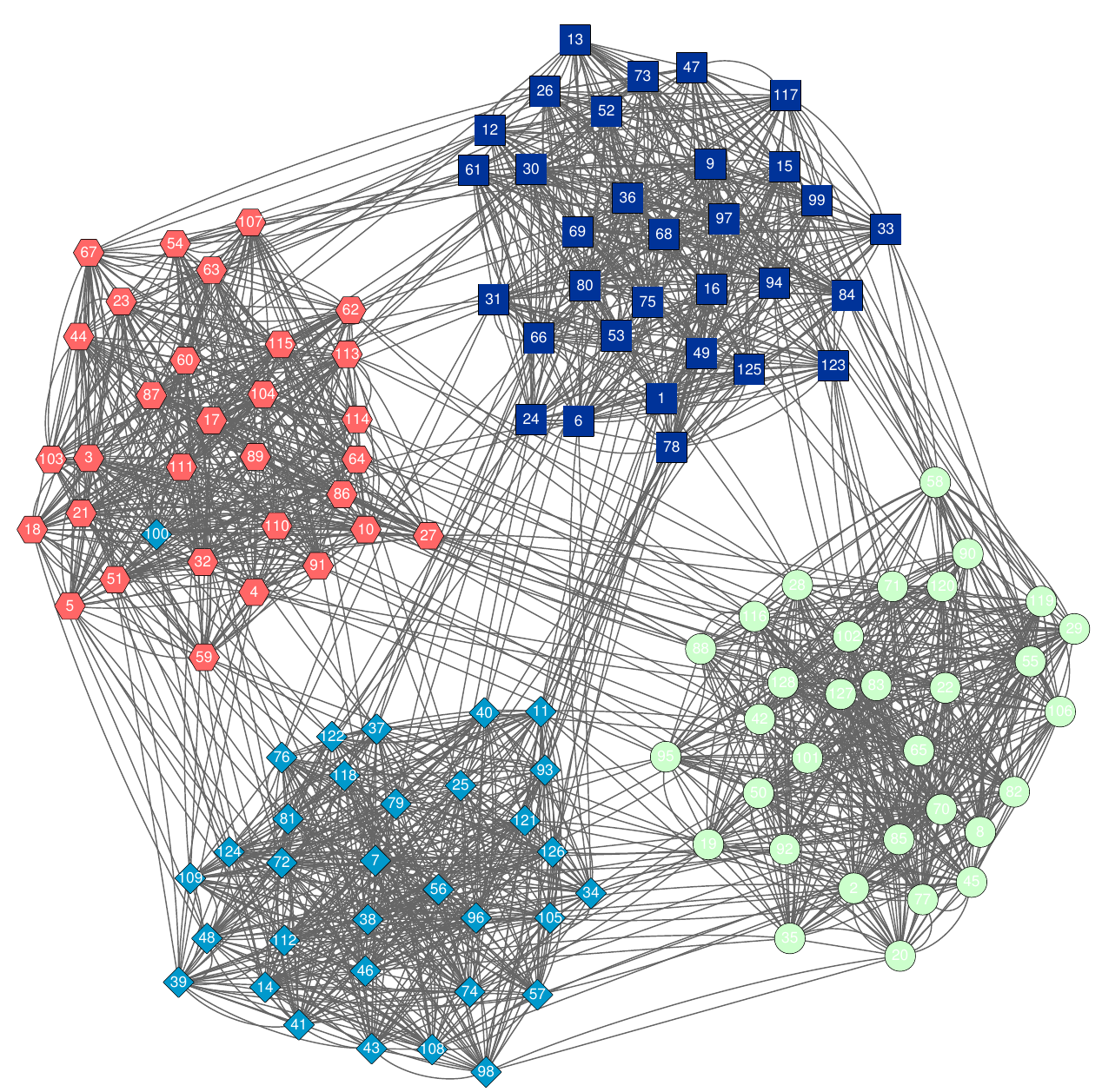}} \hspace{0mm}
\caption{Results on Zachary karate club network and Girvan-Newman network} 
\vspace{-6mm} \normalsize
\label{fig:1}
\vspace{-2mm} \normalsize
\end{figure*}

\subsection{Results on Girvan-Newman network}
The partition produced by the TSECD-D algorithm under Setting 4 is visualized in Fig. \ref{fig:newman} where the four communities are represented by four different shapes. We can see that the TSECD-D algorithm is able to reveal the latent community structure in the Girvan-Newman network and node 100 is the only misclassified node. The conductance of produced partition in Fig. \ref{fig:newman} is 0.441 and the real conductance of this Girvan-Newman network is 0.401. 

The above results have confirmed that the TSECD algorithm performs well on the standard benchmarks. For the experiments on the LFR benchmark networks with $k=4$, we have the results similar to that in Sec. \ref{subsec:LFR}, as shown in Fig \ref{fig:k=4}. Due to space limitation, we do not give detailed explanations.

\begin{figure}[hp]
\subfloat[Conductance]{\label{fig:4_conductance}\includegraphics[width=0.22\textwidth]{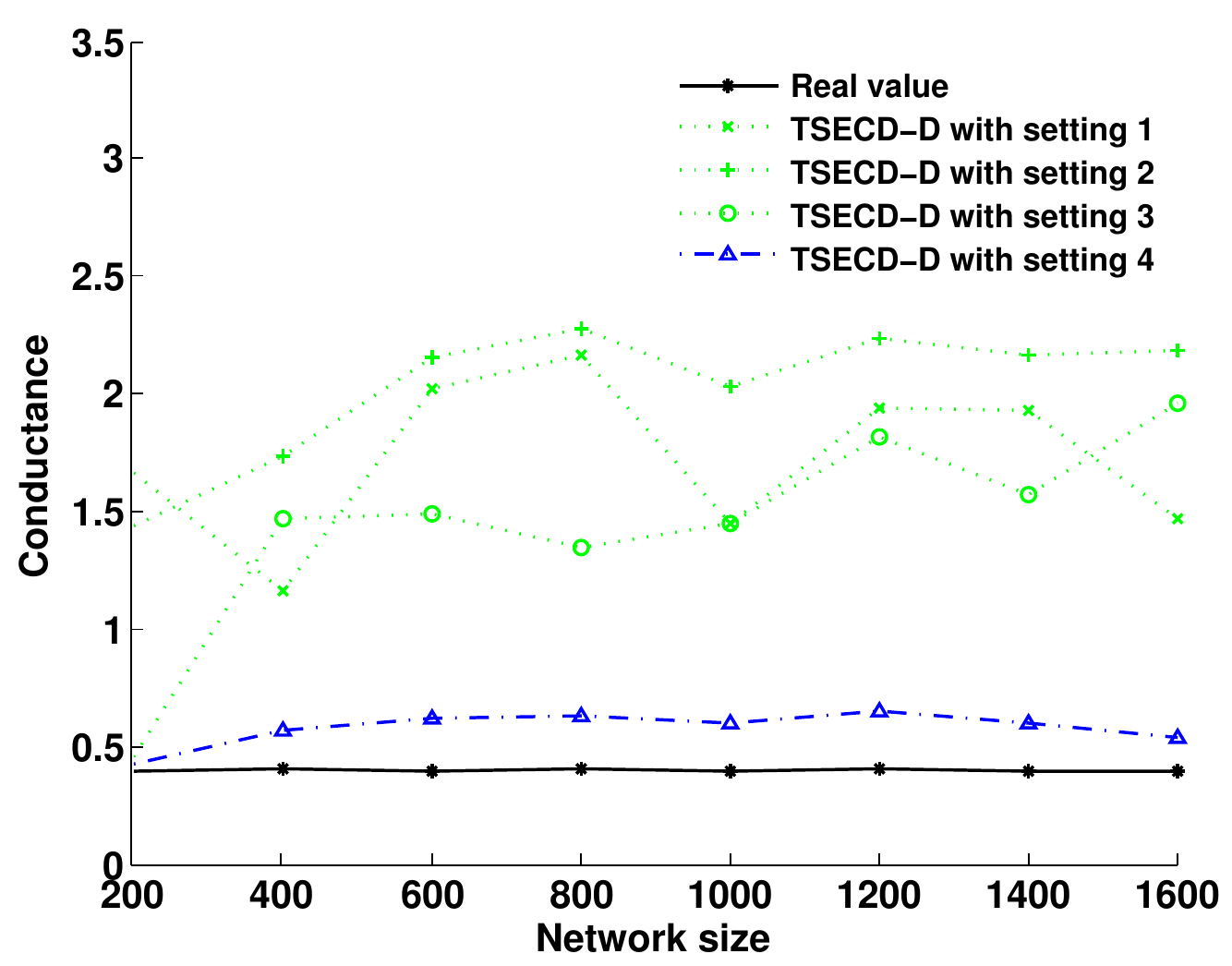}} \hspace{0mm}
\subfloat[Expansion]{\label{fig:4_expansion}\includegraphics[width=0.22\textwidth]{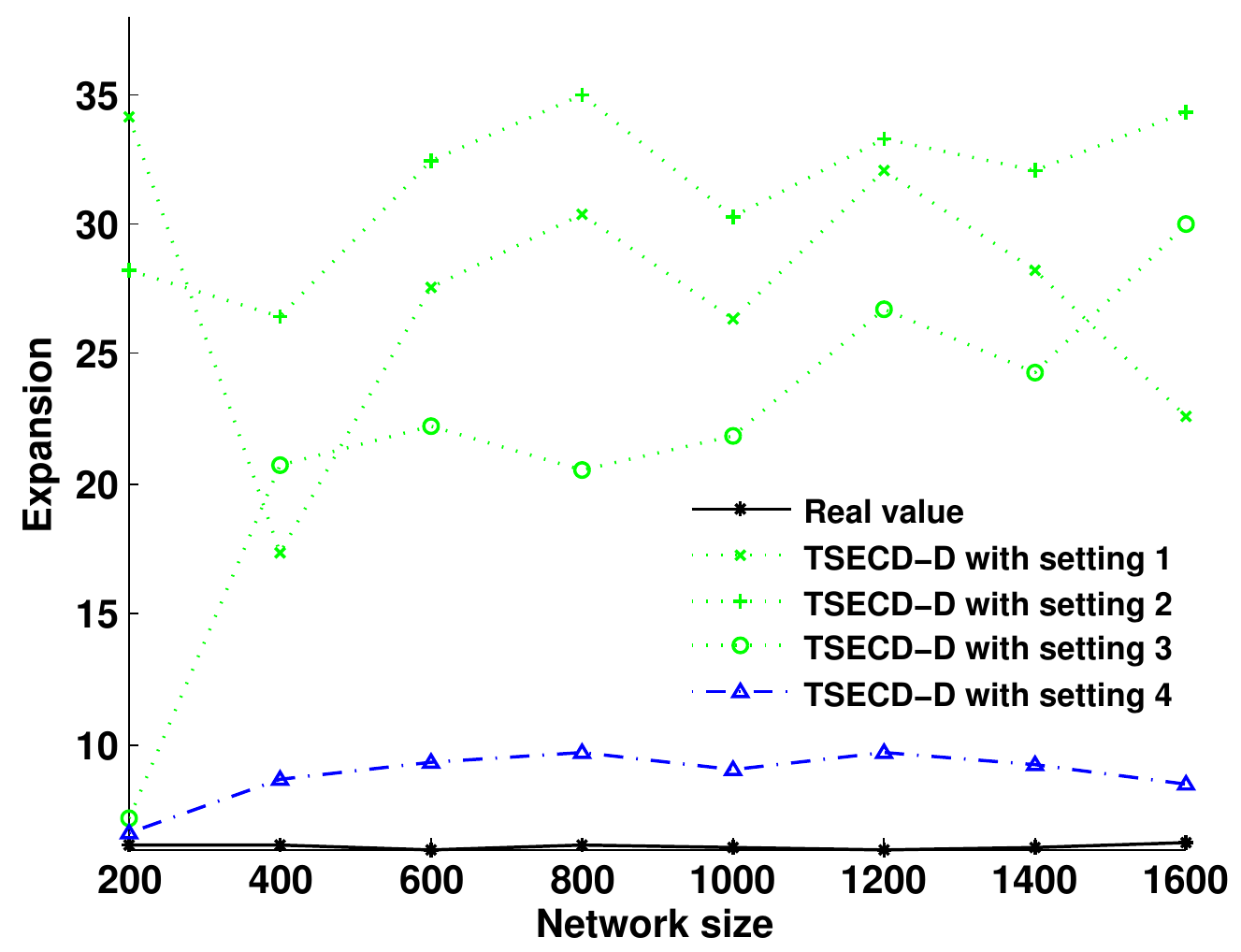}} \hspace{0mm}

\subfloat[Cut Ratio]{\label{fig:4_cutratio}\includegraphics[width=0.22\textwidth]{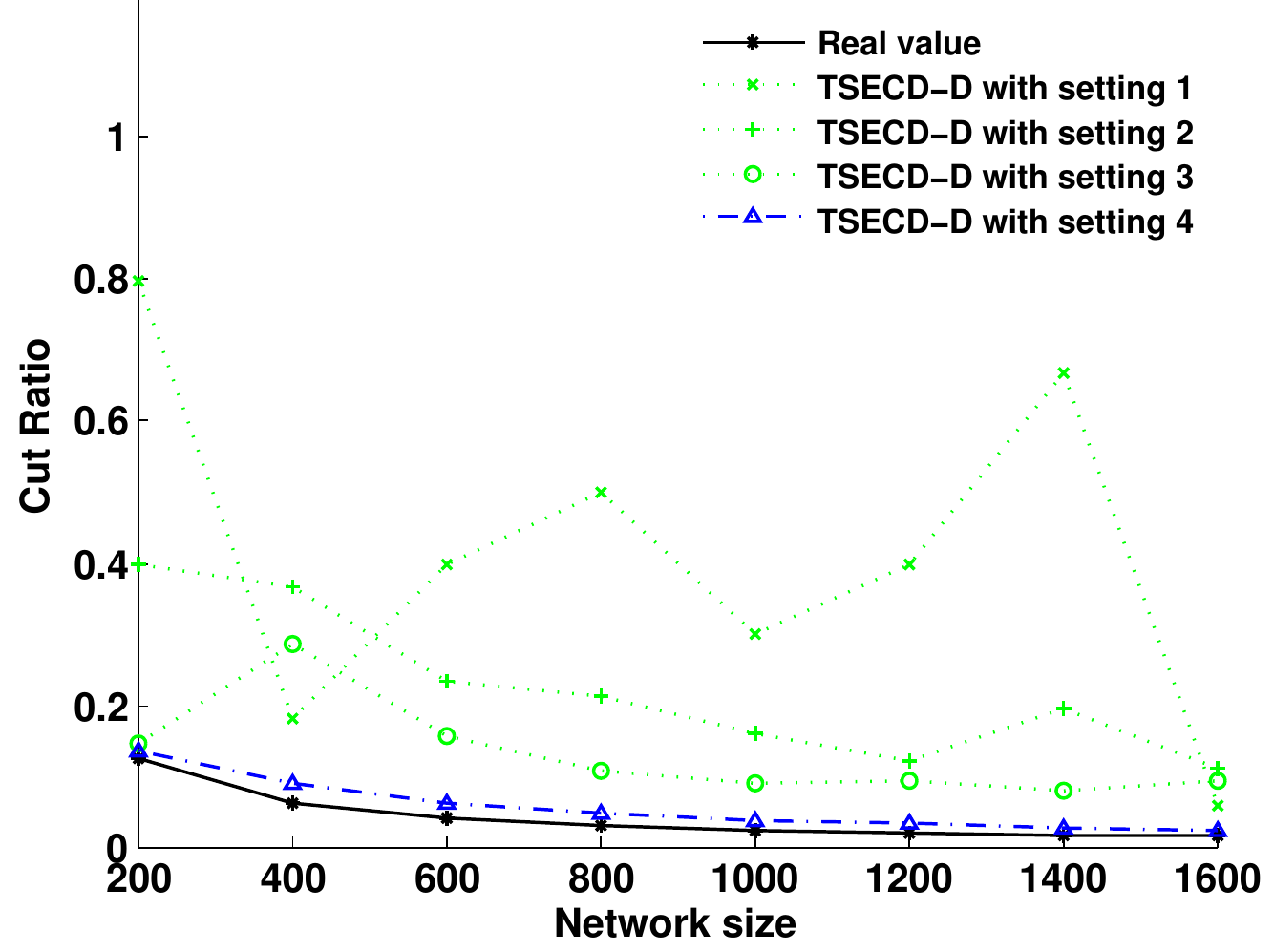}} \hspace{0mm}
\subfloat[Normalized Cut]{\label{fig:4_ncut}\includegraphics[width=0.22\textwidth]{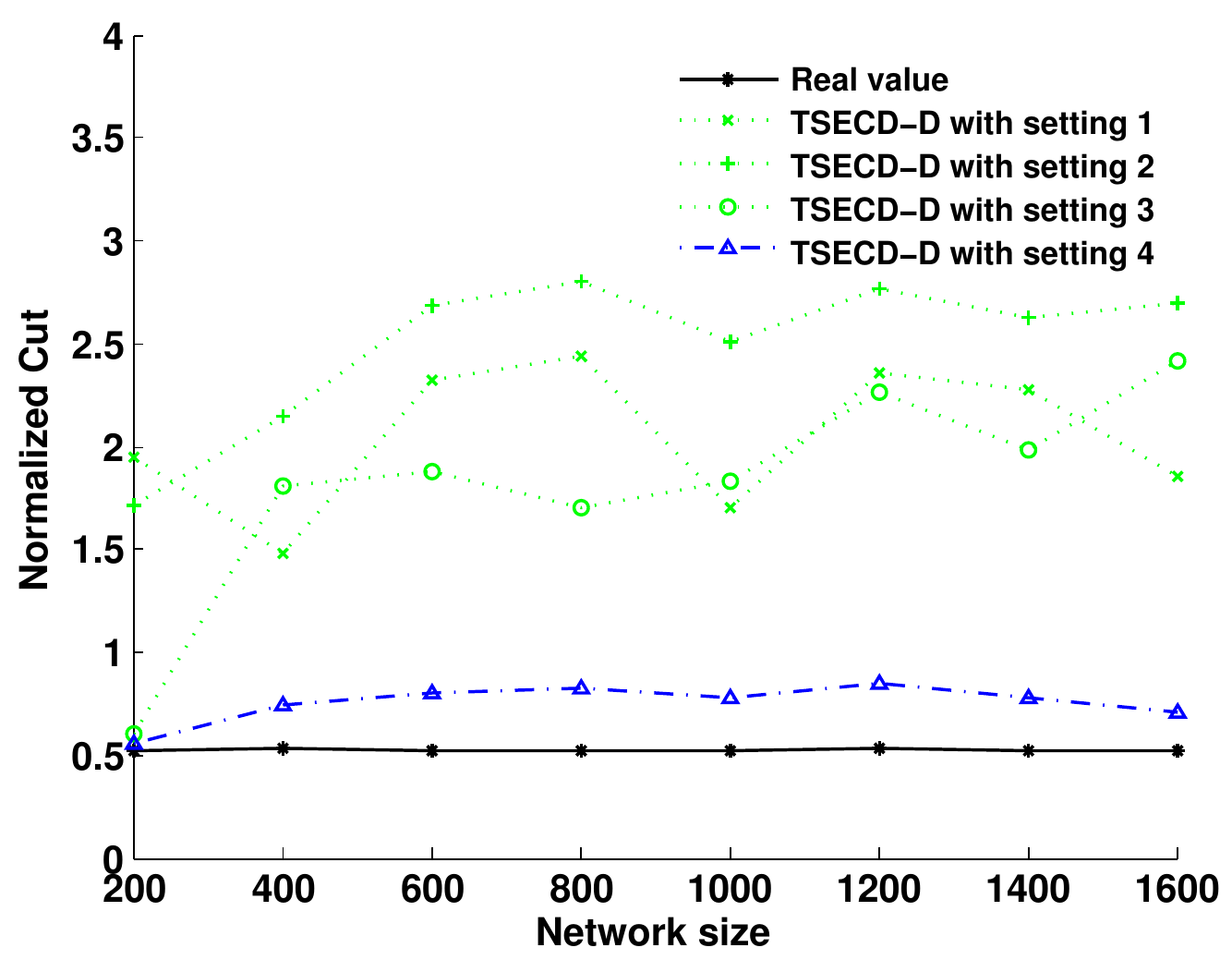}} \hspace{0mm}

\subfloat[Average-ODF]{\label{fig:4_aodf}\includegraphics[width=0.22\textwidth]{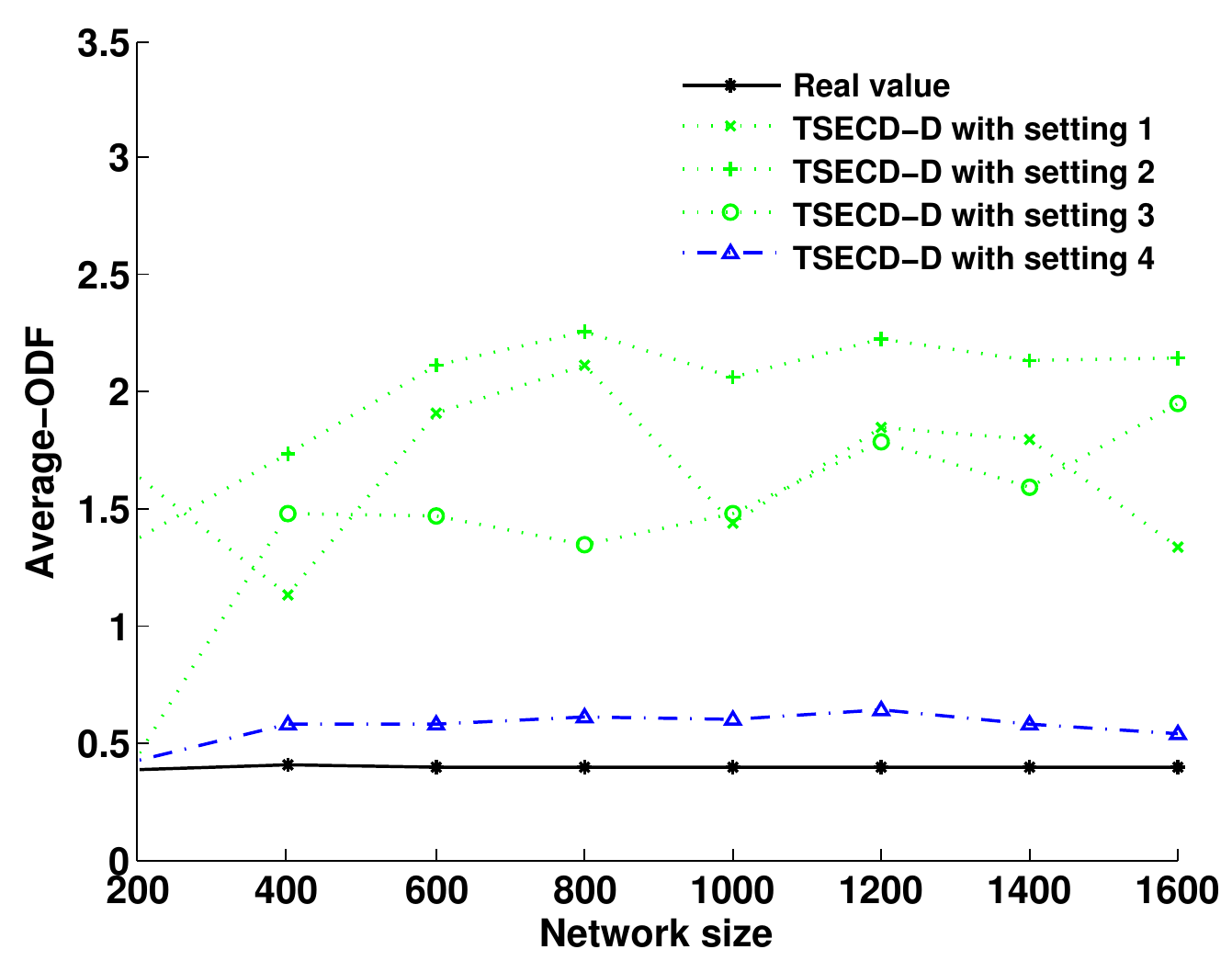}} \hspace{0mm}
\subfloat[Intenal Density]{\label{fig:4_idensity}\includegraphics[width=0.22\textwidth]{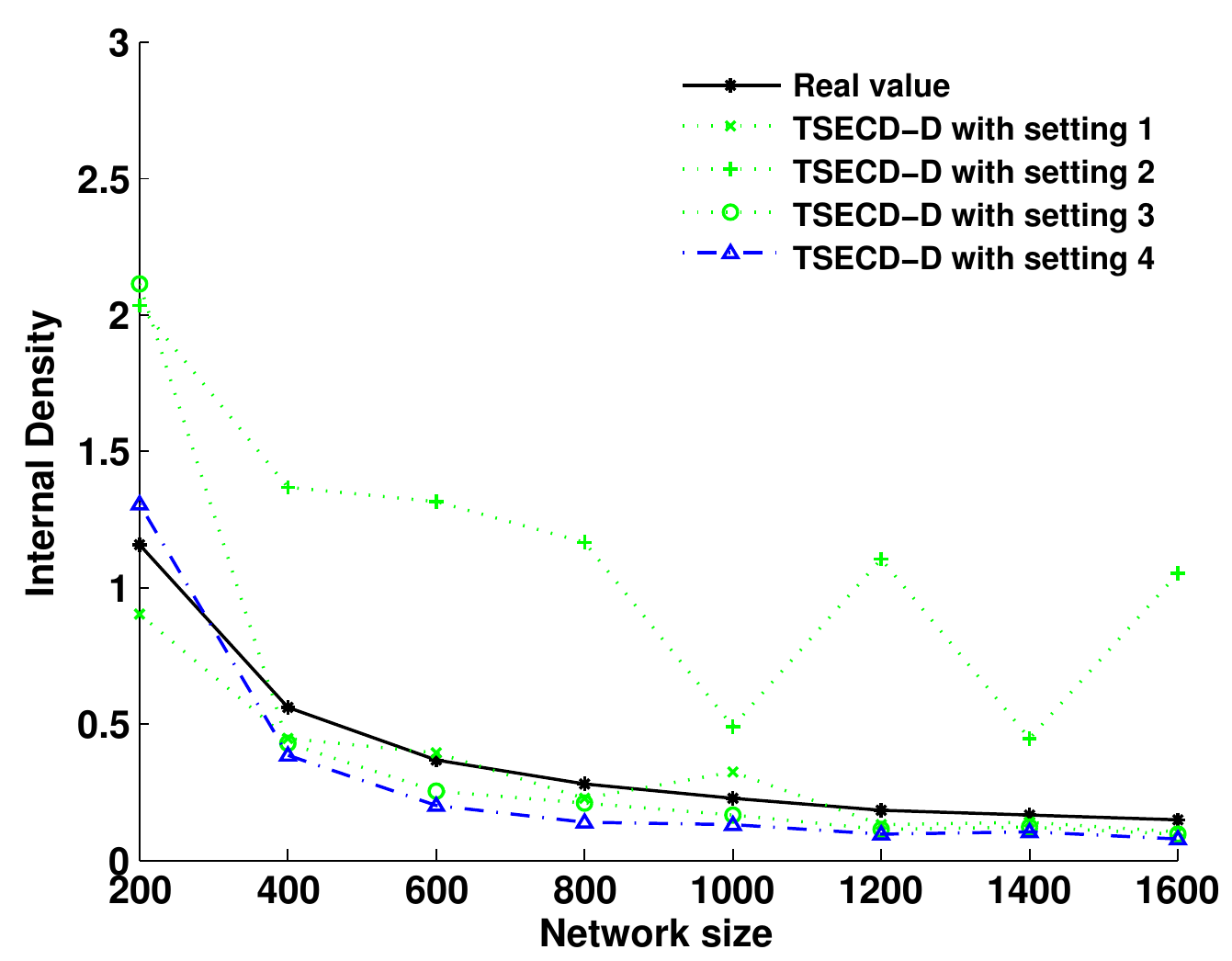}} \hspace{0mm}

\caption{\small Results for $k=4$ on the LFR benchmark networks. In all six graphs, the y-axis and x-axis denote the value of $f(S)$ under different measures and the network size, respectively. Each graph gives
five curves plotting the real value of $f(S)$ and the scores of the communities produced by the TSECD-D algorithm under the four settings, respectively. } 
\vspace{-2mm} \normalsize
\label{fig:k=4}
\vspace{-2mm}
\end{figure}

\section{Related work}
\label{sec:related}
Community detection has received considerable attentions. In this section we briefly introduce the prior works which are the most related to this paper. The comprehensive review of the community detection techniques can be found in \cite{fortunato2010community, malliaros2013clustering}.

One of the most famous quality functions is the modularity \cite{newman2004finding} which is designed based on the idea that a subgraph which is a community is denser than the one in a random graph. Thus, we can evaluate a community by the edge density difference between a random subgraph and that community, and the most community-like partition can be found by maximizing such difference \cite{clauset2004finding}. Expansion and internal density are proposed by Radicchi \textit{et al.}\cite{radicchi2004defining}. In \cite{radicchi2004defining}, the authors also consider how to implement quantitative definitions of a community into community detection algorithms. Cut ratio metric is invented by Fortunato \cite{fortunato2010community} and it measure the possible edges leaving the community. In \cite{flake2000efficient}, Flake \textit{et al.} employ the out degree fraction as a quality measure for the web communities. Conductance \cite{kannan2004clusterings} measures the faction of the edge volume that points outside the community. The community detection algorithms based on the above quality functions have been extensively studied. According to the empirical comparison in \cite{leskovec2010empirical}, Metis+MQI algorithm is the most stable and effective approach for searching partitions with small conductance. This algorithm first uses Metis \cite{karypis1998fast} to partition the network into two parts and then employs MQI \cite{gallo1989fast} to further partition the network into low-conductance parts. Similar to the TSECD algorithm proposed in this paper, MQI is a flow-based method. However, different from Metis+MQI, the TSECD algorithm is designed based on the k-terminal cut problem and it is able to directly partition the network into $k$ clusters. Besides, our algorithm has a good approximation ratio for real cases.

Girvan and Newman \cite{girvan2002community} propose the GN algorithm for the original community detection problem. This approach successively removes the highest betweenness edge in the network until $k$ isolated subgraphs left. Instead of removing edges, our NMBCD algorithm merges two nodes in each step. Note that merging two nodes may remove more than one edges. As shown in Sec. \ref{sec:customized}, merging nodes is more efficient than removing edges, and such randomized algorithm can be generalized to the customized community detection problem. Our approach can produce the optimal solution with a provable probability in polynomial time. As further shown in Sec. \ref{subsec:equal}, the randomized algorithm becomes more effective for the equal-sized community detection problem. 

\section{Conclusion}
\label{sec:conclusion}
In this paper we have considered the community detection problem for social networks. In particular, we show that how the terminal set can be utilized to search high-quality communities. We start by considering the classic cut-related optimization problems and then consider how to improve the community quality without losing the performance guarantee. For the original community detection problem, we propose the TSECD algorithm which is able effectively identify the latent community structure and has a good approximation ratio for most real cases. To support the real applications in social networks, we further consider the customized community detection problem which impose extra requirements on feasible partitions. For such problems, we first provide a randomized algorithm NMBCD for the general case, and then a linear time algorithm for the equal-sized community detection problem when the triangle inequality holds. 


\section*{Acknowledgment}
This work was supported by NSF grants OISE 1427824 and CNS 1527727, and National Natural Science Foundation of China 
under 61472272.




\bibliographystyle{IEEEtran}
\bibliography{sigproc}
%



\end{document}